\newtheorem{theorem}{Theorem}
\newtheorem{proposition}{Proposition}
\newtheorem{definition}{Definition}
\newtheorem{example}{Example}
\newtheorem{proof}{Proof}
\newcommand{\schema}{\mathcal{R}}
\newcommand{\schemai}[1]{\schema_{#1}}
\newcommand{\rel}{R}
\newcommand{\attr}{\mathbb{A}}
\newcommand{\fd}{\phi}
\newcommand{\lhs}[1]{\mathsf{LHS}\left(#1\right)}
\newcommand{\rhs}[1]{\mathsf{RHS}\left(#1\right)}
\newcommand{\fdxy}[2]{#1 \rightarrow #2}
\newcommand{\fdxyt}[2]{\fdxy{\mathsf{#1}}{\mathsf{#2}}}
\newcommand{\fda}{\fdxy{X}{a}}
\newcommand{\fds}{\Phi}
\newcommand{\fdcov}{\underline{\fds}}
\newcommand{\repair}{\rel^*}
\newcommand{\repairi}[1]{\repair_{#1}}
\newcommand{\rf}[1]{\rho_{#1}}
\newcommand{\rfapp}[2]{\rf{#1}\left(#2\right)}
\newcommand{\tid}{\operatorname{tid}}
\newcommand{\seqrep}[2]{#1 \rightsquigarrow #2}
\newcommand{\fseqrep}[2]{#1 \rightsquigarrow_{F} #2}
\newcommand{\notfseqrep}[2]{\neg\left(\fseqrep{#1}{#2}\right)}
\newcommand{\cl}[1]{C_{#1}}
\newcommand{\parti}{\mathcal{P}}
\newcommand{\bags}[1]{\mathcal{B}(#1)}
\newcommand{\bproj}[2]{#1\llbracket #2\rrbracket}
\newcommand{\dsf}[1]{\mathsf{DSF}\left(#1\right)}
\newcommand{\rbefore}{\repair_{\mathsf{before}}}
\newcommand{\rafter}{\repair_{\mathsf{after}}}
\title{Cleaning data with Swipe}
\author{Toon Boeckling, Antoon Bronselaer}
\date{March 2024}
\begin{document}

\maketitle

\begin{abstract}
The repair problem for functional dependencies is the problem where an input database needs to be modified such that all functional dependencies are satisfied and the difference with the original database is minimal.
The output database is then called an \emph{optimal repair}.
If the allowed modifications are value updates, finding an optimal repair is $\mathsf{NP}$-hard.
A well-known approach to find \emph{approximations} of optimal repairs builds a Chase tree in which each internal node resolves violations of one functional dependency and leaf nodes represent repairs.
A key property of this approach is that controlling the branching factor of the Chase tree allows to control the trade-off between repair quality and computational efficiency.
In this paper, we explore an extreme variant of this idea in which the Chase tree has only one path.
To construct this path, we first create a partition of attributes such that classes can be repaired sequentially.
We repair each class only once and do so by fixing the order in which dependencies are repaired.
This principle is called \emph{priority repairing} and we provide a simple heuristic to determine priority.
The techniques for attribute partitioning and priority repair are combined in the Swipe algorithm.
An empirical study on four real-life data sets shows that Swipe is one to three orders of magnitude faster than multi-sequence Chase-based approaches, whereas the quality of repairs is comparable or better.
Moreover, a scalability analysis of the Swipe algorithm shows that Swipe scales well in terms of an increasing number of tuples.
\end{abstract}

\section{Introduction}
\label{sec:introduction}
We study the problem of repairing an inconsistent database through value modifications in the case where constraints are functional dependencies (FDs).
More precisely, starting from a database that violates some FDs (i.e., a dirty database) we seek to generate a database that satisfies all FDs (i.e., a repair) and that originates from the dirty database by sequentially changing the value of a cell in a table.
Beyond the scope of toy problems, there usually exist many possible repairs for a given dirty database and most of them are perceived as ``not good'' because they simply make too much changes to the dirty database.
To deal with this, one of the most prevalent approaches, is to define a \emph{cost model} that assigns a positive cost to each change.
Introducing this cost model gives rise to the \emph{repair problem for FDs}, which is the problem of finding a repair that minimizes the accumulative cost of all changes one must make to produce that repair.
Such a repair is called a \emph{minimal-cost repair}.
The following example illustrates the repair problem for FDs and will be used as a running example throughout the paper.

\begin{example}
\label{ex:running}
Figure~\ref{fig:example} (top) shows a snippet of the Hospital data set \cite{Xu2013}, for which the seven FDs shown must be satisfied.
If we assume that any change to an attribute value has cost $1$, then a minimal-cost repair can be obtained by changing the values marked in grey (which happen to be here the actual errors in the data set).
This repair has cost $6$ and it can be verified by the reader that no repair of cost $5$ exists.
It can be seen from this example that minimal-cost repairs are not unique, since several other repairs of cost $6$ exist.
For example, the violation of $\fdxyt{measure\ code}{condition}$ by tuples $1$ and $5$ can also be resolved by changing the value of $\mathsf{measure\ code}$ for tuple $5$.
\end{example}

\begin{figure}[!htb]
  \centering
  \includegraphics[width=0.95\textwidth]{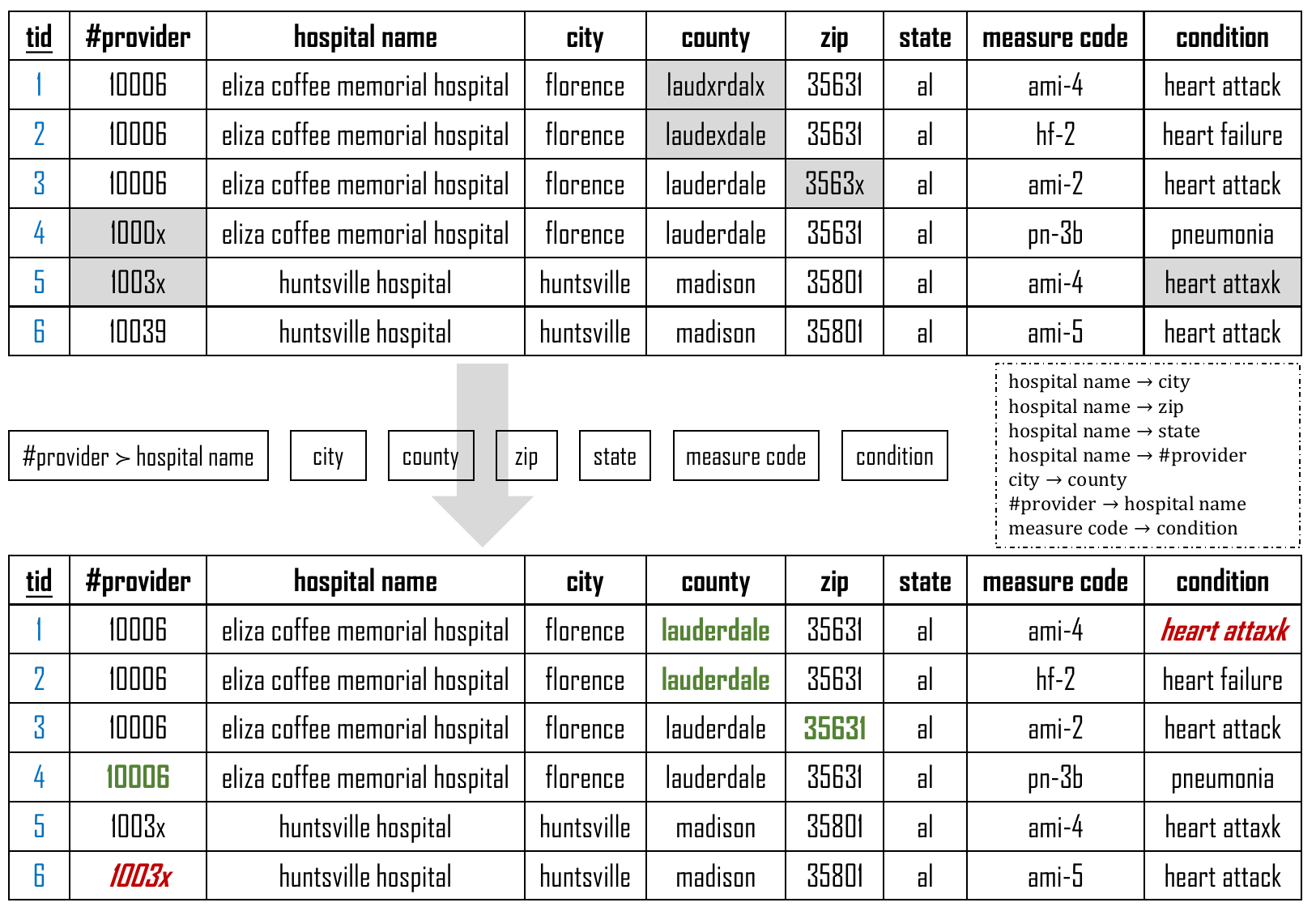}
  \caption{An example cleaning scenario with hospital data \cite{Xu2013} (top) and seven FDs (middle right).
  Actual errors in the data are marked in grey.
  A partition of attributes is shown (middle left) over which FDs are forward repairable.
  The first partition class shows a priority model over its attributes.
  For this partition, a repair obtained by using majority voting with random tie breaking as repair function is shown (bottom).
  Correct changes are shown in green bold font and incorrect changes are shown in red bold font.}
  \label{fig:example}
\end{figure}

\paragraph{The Llunatic Chase algorithm}
Although we can easily find a minimal-cost repair in Example~\ref{ex:running}, it is known that the repair problem for FDs as defined here, cannot be solved efficiently. 
Deciding whether there exists a repair with a cost lower than a constant $C$, is $\mathsf{NP}$-complete and finding constant-factor approximations of minimal-cost repairs is $\mathsf{NP}$-hard \cite{Solmaz2009}.
To deal with this, algorithms have been proposed to search for \emph{approximate} solutions \cite{Bohannon2005,Solmaz2009,Beskales2010,Livshits2018}.
In the current paper, we focus on one particular approach that relies on the Chase algorithm and has been implemented in the open-source framework Llunatic \cite{Geerts2013,Geerts2019}.
We note and emphasize here that Llunatic can in fact deal with constraints that are more expressive than FDs.
Yet, we limit the explanation of Llunatic here to FDs only, as these constraints are the focus of the current paper.
Originally, the Chase algorithm was designed to work as a proof engine for dependency implication \cite{Beeri1984,Abiteboul1995}.
Informally, the algorithm takes as input (i) a set of tuples and (ii) a set of FDs and then builds a sequence of \emph{Chase steps}.
In each such Chase step, one FD that is currently violated on the given set of tuples, becomes satisfied by equating the right-hand side attributes of the FD.
The output of a Chase step is then used as input for the next Chase step and this process continues until all FDs are satisfied.
In the original algorithm, tuples contained \emph{labelled variables} and conflict resolution among variables with different labels is done by a consistent choice (e.g., if labels are integer-valued indices, we can always choose the variable with the lower index).
In this setting, it is well-known that the Chase algorithm terminates and satisfies the \emph{Church Rosser} property \cite{Abiteboul1995}.
The latter property states that the output of the Chase algorithm does not depend on the order in which FDs are selected.

In order to use the Chase algorithm as a repair tool in the presence of constants, a modification of the Chase algorithm called the \emph{Llunatic Chase} has been proposed \cite{Geerts2019}.
In this algorithm, resolution of conflicts is based on a \emph{partial order} that models preferences of different constants and variables (e.g., null values).
If a conflict cannot be resolved by means of this partial order, special labelled variables (i.e., lluns) are introduced, which can later be used to query for human input.
While the Llunatic Chase has been shown to terminate, it does not satisfy the Church Rosser property and the outcome of a sequence of Chase steps is thus dependent on the order in which FDs are selected for repair.
To deal with this, Llunatic considers a \emph{Chase tree} in which each path is a sequence of Chase steps and leaf nodes are repairs.
From all repairs in a Chase tree, a repair can then be selected.
For example, one could consider a cost model and choose the repair with minimal cost.

Because it is infeasible to generate the complete Chase tree, Llunatic uses a \emph{cost manager} that allows the definition of a wide range of pruning strategies.
Examples of these strategies include (i) limiting the number of outgoing branches of a node by means of a branching threshold, (ii) limiting the amount of leaf nodes by a potential solutions threshold and (iii) limiting modifications to forward repairs only \cite{Geerts2019}.
The cost manager thus offers a trade-off between repair quality (i.e., generating more repairs implies a higher chance of a finding a high quality repair) and computational cost (i.e., generating more repairs implies larger Chase trees).

\paragraph{Single-path Chase trees with Swipe}
The key contribution of the current paper is to investigate an extreme case of cost management.
More specifically, we investigate a degenerate variant of the Llunatic Chase where a Chase tree represents a single sequence of Chase steps and generates only one repair.
In addition, we restrict to forward repairs only, which means that violations of an FD are always restored by equating values of the right-hand side attributes in that FD.
The main problem we face in this scenario, is how to best select the next FD to repair.
To solve this problem, we will first partition the attributes involved in a given set of FDs such that the partition is \emph{forward repairable}.
This means we can iterate over the partition classes sequentially and for each class $C$, apply forward repairing for those FDs of which the right-hand side attribute is in $C$.
Once a partition of attributes is available, the next step is to visit each partition class and for each class, consider a \emph{priority model} for the FDs we must repair for that class.
The combination of (i) the partition of attributes and (ii) the priority models for each class gives us the ability to produce a (degenerate) Chase tree with a single-sequence of Chase steps, where each Chase step uses forward repairing.
We can now summarize the key contributions of this paper as follows:
\begin{itemize}
\item We present the \emph{Swipe} algorithm to repair FD violations.
This algorithm is grounded on two key ideas: (i) the notion of a partition of attributes that is forward repairable and (ii) the notion of a priority model for FDs.
We show that for any set of FDs there always exists a partition that is forward repairable and we present a simple algorithm to construct one that is maximally refined.
In addition, we provide a simple but effective heuristic to create a priority model for FDs when a class from a forward repairable partition is given.

\item Similar to other FD repair algorithms, we use equivalence relations on the set of tuples to keep track of repair steps previously done.
In order to do this in an efficient way, the Swipe algorithm uses \emph{disjoint set forests}.
These data structures model equivalence classes in a tree-based manner and have the property that the asymptotic complexity of merging two classes, is constant.

\item We study the theoretical properties of Swipe.
We prove that it always terminates and when it does, it produces a repair.
Moreover, we show that for unary FDs (i.e., FDs with a singleton left-hand side), each FD must be repaired at most once whenever resolution of conflicts is based on choice.

\item We empirically study the trade-off between repair quality and computational cost.
On one hand, we demonstrate that the generation of repairs with Swipe is one to three orders of magnitude faster than with Llunatic.
On the other hand, the repair quality in terms of $F$-score of correctly repaired attribute values, is shown to be comparable or better.
This provides first evidence for the fact that the construction of a single sequence of Chase steps can lead to good repairs.
\end{itemize}

The remainder of this paper is organized as follows.
In Section~\ref{sec:rel-work}, we revise the vast body of literature on FD repairing and highlight the most important connections to existing methods.
In Section~\ref{ref:prelim}, we introduce the basic concepts and notations related to the relational model and functional dependencies used throughout the paper.
In Section~\ref{sec:sequential-repairing}, we first formalize the notion of an attribute partition that is (forward) repairable (Section~\ref{sec:basic-definitions}) and then develop an algorithm to produce such a partition for a given set of FDs (Section~\ref{sec:partition-building}).
Next, we show how repair of a partition class is done (Section~\ref{sec:class-repair}) and finally combine our results to formulate the Swipe algorithm (Section~\ref{sec:swipe}).
We provide an experimental analysis in Section~\ref{sec:experiments} and summarize the main contributions in Section~\ref{sec:conclusion}.

\section{Related work}
\label{sec:rel-work}
In this section, we provide a concise overview of existing solutions for repairing violations of FDs and contrast them to the proposed algorithm in this paper.
For a more detailed discussion, the interested reader is referred to overview papers \cite{Ilyas2015} and monographs \cite{Fan:2012,Ilyas2019}.
We make a distinction between (i) traditional approaches where searching repairs is cast into an \emph{(approximate) optimization problem} and (ii) more recent approaches where \emph{learning} techniques are used to generate repairs.

The traditional approach towards repairing violations of FDs is to consider a cost model that encodes a positive cost to changes made to the original data.
The goal is then to find repairs that can be produced with a minimal sum of costs.
In the setting of Consistent Query Answering (CQA) \cite{Arenas1999} one usually considers the deletion of tuples as an elementary change \cite{Wijsen2006,Livshits2018}.
In the current paper, we focus on the problem where each elementary change is the replacement of the value of one attribute in one row, with some other value.
Finding an optimal repair (i.e., a repair with minimal cost) in this setting, has been shown to be max-$\mathsf{SNP}$ hard (\cite{Solmaz2009}, Theorem 5).
Several strategies have therefore been proposed to search for approximate optimal repairs efficiently, including greedy search algorithms \cite{Bohannon2005,Bohannon2007,Cong2007,Solmaz2009}, sampling approaches \cite{Beskales2010}, usage of MAX-SAT solvers \cite{Dallachiesa2013} and alignment with knowledge graphs \cite{Chu2015}.
Many of these approaches make use of the notion of equivalence classes over the set of tuples in order to keep track of which rows must receive the same value for some attribute.
This technique has been found necessary to ensure termination of repair algorithms \cite{Bohannon2005}.
The way in which equivalence classes are used, can differ between approaches.
Equivalence classes could for example be gradually built as a repair is constructed row by row \cite{Bohannon2005}.
Another approach that has been used, initializes equivalence classes in a greedy way prior to repair, thereby determining equivalences that already exist in the dirty relation \cite{Beskales2010}.
The algorithm we present also uses equivalence relations on rows, although without this notion of greedy initialization.
Moreover, we explicitly use the notion of disjoint set forests to manage equivalence relations on rows efficiently.

When we evaluate search-based approaches, several experimental studies show that they either deal with scalability issues (even for moderate-sized data sets) \cite{Dallachiesa2013,Bohannon2005,Bohannon2007} or produce repairs of low quality \cite{Rezig2017}.
This provides empirical evidence that it is challenging to find a good trade-off between computational cost and repair quality.
One particular method that facilitates this trade-off, is the open-source framework Llunatic \cite{Geerts2013, Geerts2019}, which uses a variant of the Chase algorithm.
As mentioned in the introduction, this approach produces a Chase tree where each path is a sequence of repair steps and leaf nodes are actual repairs, associated with a certain cost.
To control the space and time complexity, the Chase tree can be pruned in several ways. 
The algorithm we present here can be thought of as a degenerate variant of such a Chase algorithm that considers a single sequence of Chase steps.
By doing so, we minimize the computational cost and try to maximize quality of repairs as much as possible.
Informally, the key idea is hereby to repair small groups of attributes of a dirty relation one after the other.
This idea is similar to the Fellegi-Holt approach to repair tuple-level constraints \cite{Fellegi1976,Boskovitz2008,Boeckling2022,BronselaerAcosta2023}.
Interestingly, in the case of tuple-level constraints, it can be shown that if the set of constraints is satisfiable, there always exists a partition composed of \emph{singleton} classes, such that we can repair sequentially. This result follows from Theorem 1 in \cite{Fellegi1976} for nominal edit rules and has been extended for other types of tuple-level constraints \cite{DeWaal2011,Boeckling2022}.
As a consequence, it is possible for these constraints to repair one attribute at a time for each separate row.
A similar property does \emph{not} hold for FDs: we cannot ensure the existence of a partition that is sequentially repairable in the sense defined here and that consists solely of singleton classes \cite{Ginsburg1982}.

Over the past decade, there has been a growing interest in using learning approaches to deal with the repair problem.
The most well-known approach is HoloClean \cite{Rekatsinas2017}, which combines integrity constraints with statistical information about a given data set to compose a probabilistic model that generates repairs.
This learning-based technique has gained a lot of attention \cite{Mahdava2020} and inspired others to increase the role of learning.
More specifically, learning approaches have also been used for error detection \cite{Mahdava2019,Heidari2019,Pham2021} where a sample of clean data is used to learn how errors should be characterized.
In these approaches, there is no need anymore to use explicit integrity constraint formalisms.
Rather, there is a model that is being trained to recognize errors.
Although these approaches are promising for the future, their experimental evaluations have shown they face scalability issues as well, displaying run times of over 10 seconds for data sets with 1000 tuples.
We believe there is a requirement for tools that allow to produce repairs much faster than that.
Evidence for that requirement can be found in a recent survey paper investigating the landscape of commercial tools for data quality monitoring \cite{Ehrlinger2022survey}.
Interestingly, none of those tools have adopted methods for automated error detection or repair, whereas we are convinced that these methods are key in measurement and improvement of data quality.

\section{Preliminaries}
\label{ref:prelim}
Let $\attr$ be a countable set of attributes.
For each $a\in \attr$, let $A$ denote the \emph{domain} of $a$.
We define a \emph{(relational) schema} as a non-empty and finite set of attributes $\schema=\{a_1,\ldots,a_k\}$ and a \emph{relation} $\rel$ with schema $\schema$ as a finite set $\rel\subseteq A_1\times \ldots \times A_k$.
Elements of $\rel$ are called \emph{tuples}.
We assume that $\schema$ always has a special attribute $\tid$ that is unique for each $r\in\rel$.
For a relation $\rel$ with schema $\schema$, a set of attributes $X\subseteq\schema$ and a predicate $P$, we denote the \emph{projection} of $\rel$ over $X$ by $\rel[X]$ and the \emph{selection} over $P$ by $\rel_{P}$.
For convenience of notation, if $X$ is a singleton set $\{a\}$, we denote $\rel[\{a\}]$ by $\rel[a]$.
In a similar way, if $R$ contains only one tuple $r$, we use the notation $r[X]$.
Note that the projection $\rel[X]$ is a set and therefore contains no duplicate tuples.
In some cases, it is useful to restrict the schema to attributes $X$ but keep duplicates.
We denote $\bproj{\rel}{X}$ as the \emph{multiset} of values obtained after restricting the schema to attributes $X$ without removal of duplicates.

A \emph{functional dependency} (FD) $\fd$ defined over a schema $\schema$ is an expression of the form $X\rightarrow a$ such that $X\subseteq \schema$ and $a\in \schema$.
In this expression, $X$ is called the left-hand side and $a$ is called the right-hand side of the FD\footnote{We can assume the right-hand side is a single attribute without loss of generality.}.
For an FD $\fd$, we will use the notation $\lhs{\fd}$ for the left-hand side of $\fd$ and $\rhs{\fd}$ for the right-hand side of $\fd$.
A relation $\rel$ with schema $\schema$ \emph{satisfies} $\fd$ (denoted by $\rel\models \fd$) if:
\[\forall r_1\in \rel: \forall r_2\in \rel: r_1[X]=r_2[X] \Rightarrow r_1[a]=r_2[a].\]
Similarly, $\rel$ \emph{satisfies} a set of FDs $\fds$ (denoted by $\rel\models \fds$) if it satisfies all $\fd\in\fds$.
For a set of FDs $\fds$ and a set $Z\subseteq\schema$, the \emph{projection} of $\fds$ over $Z$ is denoted by $\fds[Z]$ and defined as:
\[\fds[Z] = \{\fd\mid \fd\in\fds\wedge \lhs{\fd}\subseteq Z\wedge \rhs{\fd}\in Z\}.\]
An FD $\fd$ defined over $\schema$ is said to by \emph{implied} by a set of FDs $\fds$, if for any relation $\rel$ with schema $\schema$, we have $\rel\models\fds \Rightarrow \rel\models\fd$ (i.e., satisfaction of $\fds$ implies satisfaction of $\fd$).
We use the notation $\fds\models\fd$ to say that $\fd$ is logically implied by the set $\fds$.
Two sets of FDs $\fds$ and $\fds'$ are said to be \emph{equivalent}, denoted by $\fds\equiv\fds'$ if:
\[\left(\forall \fd\in\fds: \fds'\models \fd\right) \wedge \left(\forall \fd'\in\fds': \fds\models \fd'\right).\]
An FD $\fda$ is called \emph{minimal} if it is not implied by any $\{\fdxy{X'}{a}\}$ where $X'\subset X$.
Finally, for a given set of FDs $\fds$, a \emph{minimal cover} of $\fds$ is any set $\fdcov$ such that (i) $\fds\equiv\fdcov$, (ii) all $\fd\in\fdcov$ are minimal and (iii) no proper subset of $\fdcov$ is equivalent with $\fds$.

\begin{table}[ht!]
\caption{Overview of symbols and key concepts.}
\label{table:symbols-overview}
\centering
\begin{tabular}{ll}

\toprule
\textbf{Symbol} & \textbf{Meaning} \\
\midrule
$\attr$        & Countable set of attributes \\
$a\in\attr$    & An attribute with domain $A$\\
$\schema$ 	   & Relational schema $\{a_1,\ldots, a_k\}$\\
$\tid$         & Tuple identifier attribute\\
$\rel$		   & Relation with schema $\schema$\\
$r\in\rel$ 	   & A tuple in relation $\rel$\\
$\rel[X]$      & Projection of $\rel$ over $X\subseteq\schema$\\
$\bproj{\rel}{X}$ & Projection of $\rel$ over $X\subseteq\schema$ with preservation of duplicates\\
$\fd$          & Functional dependency (FD) of the form $X\rightarrow a$ \\
$\fds$         & A set of FDs \\
$\fds\models\fd$& FD $\fd$ is implied by $\fds$ \\
$\fdcov$       & A minimal cover of $\fds$ \\
$\fds[Z]$      & Projection of $\fds$ over $Z$ \\
\midrule
$\parti$                 & A partition of $\schema$ of the form $\left[\cl{1}\ldots\cl{m}\right]$ \\
$\schemai{i}$            & For $\parti=\left[\cl{1}\ldots\cl{m}\right]$ and $1\leq i\leq m$, $\schemai{i}=\{\cl{1}\cup\ldots\cup\cl{i}\}$\\
$\repairi{i}$            & Partial repair: a relation with schema $\schema$ that satisfies $\fds[\schemai{i}]$\\
$\seqrep{\parti}{\fds}$  & $\fds$ is sequentially repairable over $\parti$ \\
$\fseqrep{\parti}{\fds}$ & $\fds$ is forward repairable over $\parti$ \\
$P^+$                    & The preorder obtained from a set of FDs $\fds$\\
\midrule
$\dsf{a}$                & A disjoint set forest (DSF) for $a$\\
$\rf{a}$                 & A repair function for $a$\\
\bottomrule
\end{tabular}
\end{table}

\section{Sequential repairing}
\label{sec:sequential-repairing}

\subsection{Basic definitions}
\label{sec:basic-definitions}
Consider a schema $\schema$ and a set of FDs $\fds$ defined over $\schema$.
Let $\parti = \left[\cl{1}\ldots\cl{m}\right]$ denote a \emph{partition} of $\schema$ where $m\leq k$ and for any $i\in\{1,\ldots,m\}$, $\cl{i}\subseteq\schema$ are mutually disjoint \emph{partition classes}.
We assume in this paper that partition classes $\cl{i}$ are \emph{ordered} by their index $i$ and we refer to this order as the \emph{natural} order of the partition.
In the context of such a partition $\parti$, for any $i\in\{1,\ldots,m\}$, we denote by $\repairi{i}$ a relation with schema $\schema$ such that $\repairi{i} \models \fds[\schemai{i}]$ where $\schemai{i} = \{\cl{1}\cup\ldots\cup \cl{i}\}$.
In the context of our repairing approach, we will call such a relation a \emph{partial repair}.
We now say that $\fds$ is \emph{sequentially repairable} over $\parti$, denoted by $\seqrep{\parti}{\fds}$, if:
\begin{equation}
\forall i\in\{1,\ldots,m\}:\forall \repairi{i-1}:\exists \repairi{i}:\repairi{i}[\schemai{i-1}] = \repairi{i-1}[\schemai{i-1}]
\end{equation}
where we adopt the convention that $\repairi{0}=\rel$.
In words, if $\fds$ is sequentially repairable over $\parti$ then any partial repair $\repairi{i-1}$ can be transformed into a partial repair $\repairi{i}$ by modifying only the values for attributes in $\cl{i}$.
By repetition, we eventually obtain a relation $\repairi{m}$ that satisfies $\fds$.
The relevance of sequential repairability lies in the fact that if $\seqrep{\parti}{\fds}$, then classes from $\parti$ can be repaired \emph{one at a time} in order of increasing $i$.
We now make two refinements to the notion of sequential repairability in order to make it a useful repair instrument.

First, it easy to see that for any schema $\schema$ and any set $\fds$ defined over $\schema$, there always exists at least one (trivial) partition for which $\fds$ is sequentially repairable.
Indeed, since FDs are always satisfiable, we have that $\seqrep{\left[\schema\right]}{\fds}$.
This partition is not very useful because it simply tells us we can repair violations of $\fds$.
However, if we are able to \emph{refine} this partition (i.e., splitting classes into disjoint sub-classes) and still satisfy the condition of sequential repairability, we can separate the treatment of different FDs and exploit this property in a repair algorithm.
For that reason, we are interested here in those partitions $\parti$ that are \emph{maximally refined} without losing the property of sequentially repairability for $\fds$.

Second, if $\seqrep{\parti}{\fds}$, it is possible that FDs must be satisfied by either equating the values for the right-hand side attribute or differentiating the values for the left-hand side attributes.
For example, if $\schema=\{a,b\}$ and $\fds=\{\fdxy{a}{b}\}$ then we have $\seqrep{\left[\{a\},\{b\}\right]}{\fds}$ as well as $\seqrep{\left[\{b\},\{a\}\right]}{\fds}$.
That is, for given values of $a$, we can choose values for $b$ and make sure equal values for $a$ imply equal values for $b$.
Similarly, for given values of $b$, we can choose values for $a$ that do not cause a violation of $\fdxy{a}{b}$.
In this paper, we restrict repairing to the case where violations of FDs are resolved by changing the values of the right-hand side attribute.
This operation is also known as \emph{forward} repairing \cite{Geerts2019}.
We provide the following definition.
\begin{definition}
\label{def:forward-seq-repair}
For a schema $\schema$ and a set of FDs $\fds$ defined over $\schema$, let $\parti$ be a partition of $\schema$.
Then $\fds$ is \emph{forward repairable} over partition $\parti$, denoted by $\fseqrep{\parti}{\fds}$, if and only if:
\begin{equation}
\label{eq:forward-repair}
\forall \cl{i}\in\parti:\forall \fd\in \fds[\schemai{i}] \setminus \fds[\schemai{i-1}]: \rhs{\fd} \in \cl{i}
\end{equation}
\end{definition}
In words, $\fds$ is forward repairable if for every class $\cl{i}$, every FD that was not considered in the scope of previous classes, has a right-hand side attribute that is an element of $\cl{i}$.
We now have the following result.
\begin{proposition}
\label{prop:seq-rep-implication}
For a schema $\schema$ and FDs $\fds$ defined over $\schema$, we have $\left(\fseqrep{\parti}{\fds}\right) \Rightarrow \left(\seqrep{\parti}{\fds}\right)$.
\end{proposition}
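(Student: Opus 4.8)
The plan is to give a direct construction. I fix an index $i\in\{1,\ldots,m\}$ and an arbitrary partial repair $\repairi{i-1}$, so that $\repairi{i-1}\models\fds[\schemai{i-1}]$, and I must exhibit a relation $\repairi{i}$ with $\repairi{i}\models\fds[\schemai{i}]$ that agrees with $\repairi{i-1}$ on every attribute of $\schemai{i-1}$. The whole argument turns on first pinning down exactly which FDs live in $\fds[\schemai{i}]$, and this is where forward repairability does the heavy lifting.

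First I would decompose $\fds[\schemai{i}]$. Since $\schemai{i-1}\subseteq\schemai{i}$, it follows straight from the definition of the projection $\fds[\cdot]$ that $\fds[\schemai{i-1}]\subseteq\fds[\schemai{i}]$, hence $\fds[\schemai{i}] = \fds[\schemai{i-1}] \cup \left(\fds[\schemai{i}]\setminus\fds[\schemai{i-1}]\right)$. Every $\fd$ in the first part has both $\lhs{\fd}\subseteq\schemai{i-1}$ and $\rhs{\fd}\in\schemai{i-1}$, so it mentions only attributes of $\schemai{i-1}$. By Definition~\ref{def:forward-seq-repair}, every $\fd$ in the second part satisfies $\rhs{\fd}\in\cl{i}$. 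The crucial consequence I want to extract is that no newly introduced FD at level $i$ can have its right-hand side inside $\schemai{i-1}$, so satisfying the new FDs will never force me to touch an attribute of $\schemai{i-1}$.

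The construction of $\repairi{i}$ is then almost trivial. I keep every attribute outside $\cl{i}$ equal to its value in $\repairi{i-1}$ (in particular all of $\schemai{i-1}$, leaving the identifier $\tid$ untouched so the result stays a relation with distinct tuples), and for each non-identifier attribute $a\in\cl{i}$ I assign one fixed constant $c_a\in A$ to every tuple. Agreement on $\schemai{i-1}$ holds by construction. For the verification: the FDs of $\fds[\schemai{i-1}]$ remain satisfied because they mention only $\schemai{i-1}$, which is unchanged, and $\repairi{i-1}\models\fds[\schemai{i-1}]$; and each new $\fd$ has $\rhs{\fd}=a\in\cl{i}$, so since all tuples now carry the same constant $c_a$ in attribute $a$, the consequent $r_1[a]=r_2[a]$ of the satisfaction condition holds for every pair of tuples and $\fd$ is satisfied regardless of $\lhs{\fd}$. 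Hence $\repairi{i}\models\fds[\schemai{i}]$, which is exactly the witness required by $\seqrep{\parti}{\fds}$.

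The step I expect to carry the real weight is the decomposition: the claim that forward repairability forbids any $\fd\in\fds[\schemai{i}]$ whose right-hand side lies in $\schemai{i-1}$ from being new. Everything downstream — the constant assignment, which is just the standard trivial way to satisfy a set of FDs once their right-hand side attributes are fixed — is routine once that observation is in place. I would therefore state this consequence explicitly, perhaps as a short separate remark, since it is precisely the point where the \emph{forward} restriction is essential and where a merely sequentially repairable partition could behave differently.
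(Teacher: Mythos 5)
Your proof is correct and follows essentially the same route as the paper's: the paper's one-line argument is precisely that each new FD $\fd\in\fds[\schemai{i}]\setminus\fds[\schemai{i-1}]$ has $\rhs{\fd}\in\cl{i}$, so it can be satisfied by equating the values of $\rhs{\fd}$ without touching $\schemai{i-1}$. Your version merely spells out the decomposition of $\fds[\schemai{i}]$ and the constant-assignment witness explicitly, which is a faithful elaboration rather than a different argument.
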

In this paper, we seek to repair violations of FDs by first building a partition of attributes that allows forward repairing and then apply a repair algorithm to each class of this partition, following the natural order of that partition.
This repair algorithm uses a \emph{priority} model that dictates the order in which FDs are repaired.
We now demonstrate the main ideas behind this algorithm in the context of the running example.
\begin{example}
\label{ex:process}
In Figure~\ref{fig:example}, the attribute partition (middle left) is forward repairable for the FDs.
This partition is also maximally refined w.r.t. forward repairability.
That means we cannot split classes without breaking the condition of forward repairability.
To construct a repair $\repair$, we visit the classes of $\parti$ in their natural order.
The first class is $\cl{1} = \{\mathsf{hospital\ name}, \mathsf{\# provider}\}$ and we have:
\[\fds[\cl{1}] = \{\fdxyt{hospital\ name}{\#provider}, \fdxyt{\#provider}{hospital\ name}\}.\]
We then sort these FDs (this sorting is made more precise later in the paper) and visit them in order.
Suppose for now we first visit $\fdxyt{hospital\ name}{\#provider}$, then we find two violations: one for tuples $\{1,2,3,4\}$ and one for tuples $\{5,6\}$.
We then do two things.
First, we register that these two groups of tuples must have the same value for $\mathsf{\#provider}$ in a special data structure.
This data structure is used to keep track of decisions made during repair of visited FDs.
Second, we resolve the violation by applying a function that maps the bag of values observed for $\mathsf{\#provider}$ in the violating tuples to a single value.
In this example, we do this by majority voting with random tie breaking.
Concretely, this means we change $\mathsf{\#provider}$ for $\tid=4$ into $\mathsf{10006}$.
For the other violation, we have a random choice between the values so assume we choose $\mathsf{1003x}$.
We then visit $\fdxyt{\#provider}{hospital\ name}$ and find no violations.
Since all FDs in $\fds[\cl{1}]$ are now satisfied, we have constructed the first partial repair $\repairi{1}$.
We then proceed to $\cl{2}=\{\mathsf{city}\}$. 
The set $\fds[\cl{1}\cup\cl{2}]$ now has one additional FD, which is $\fdxyt{hospital\ name}{city}$ and it has no violations.
Moreover, the FDs from the previous step are ensured to remain satisfied so we now have a second partial repair $\repairi{2}$ for which $\repairi{2} \models\fds[\cl{1}\cup\cl{2}]$.
This procedure continues until we have visited $\cl{7}=\{\mathsf{condition}\}$.
After this last step, the original data has been iteratively transformed into a repair $\repair$ that satisfies all FDs.
An example repair under the assumption of majority voting with random tie breaking to fix violations, is shown in Figure~\ref{fig:example} (bottom).
\end{example}
From the example given above, two important questions arise.
The first question deals with the generation of a maximally refined partition that is forward repairable for a given set of FDs.
We will treat this question in Section~\ref{sec:partition-building}.
The second question deals with how violations of FDs are resolved for a single class.
We will treat this question in Section~\ref{sec:class-repair}.
After that, we combine all these ideas in Section~\ref{sec:swipe} to present the Swipe algorithm.

\subsection{Attribute partition building}
\label{sec:partition-building}
This section presents a method to construct, for given $\schema$ and $\fds$ over $\schema$, a partition $\parti$ that (i) is maximally refined and (ii) satisfies $\fseqrep{\parti}{\fds}$.
To do so, note that Definition~\ref{def:forward-seq-repair} (Eq.~\eqref{eq:forward-repair}) implies that for any FD $\fdxy{X}{a}$ in $\fds$, attributes in $X$ must be part of a class that does \emph{not lie after} the class that contains $a$.
We will encode this information in a preorder relation constructed from $\fds$.

As a first step, we verify that $\fds$ does not contain any trivial or implied FDs.
This is to avoid that trivial FDs introduce unnecessary constraints on the order of attributes.
In the scope of the running example in Figure~\ref{fig:example}, introducing the trivial FD $\fdxyt{\{county,zip\}}{county}$ would create a constraint on the order of the classes that contain $\mathsf{county}$ and $\mathsf{zip}$.
This constraint is unnecessary since the FD is always satisfied.
The same observation holds for FDs that are not minimal (e.g., $\fdxyt{\{hospital\ name, state\}}{zip}$).
To ensure that we encode only necessary requirements, we transform $\fds$ into a \emph{minimal cover} $\fdcov$.
From such a minimal cover, we compose a binary relation $P\subseteq\schema\times\schema$ that satisfies the following criteria:
\begin{enumerate}
\item $\forall a\in\schema: (a,a)\in P$,
\item $\forall \left(\fda\right)\in \fdcov: \forall b\in X: (b,a) \in P$.
\end{enumerate}
Here, $(b,a)\in P$ expresses the requirement that $b$ should not occur after $a$ in the sequence of partition classes.
Next, $P$ is transformed into its \emph{transitive closure} denoted by $P^+$ \cite{Warren75}.
The obtained relation $P^+$ is the finest preorder (reflexive and transitive) that contains $P$.
We say that $P^+$ is the preorder obtained from $\fds$.
This preorder contains an equivalence relation $\equiv_{P^+}$ on $\schema$.
More specifically, we have $a\equiv_{P^+} a'$ whenever both $(a,a')\in P^+$ and $(a',a)\in P^+$.
Consider now any partition $\parti=\left[\cl{1}\ldots\cl{m}\right]$ such that:
\begin{equation}
\label{eq:equiv-criterion}
\forall a\in\schema:\forall a'\in\schema: a\equiv_{P^+} a' \Leftrightarrow \left(\exists \cl{i}\in\parti:a\in\cl{i}\wedge a'\in\cl{i}\right)
\end{equation}
and:
\begin{equation}
\label{eq:sort-criterion}
\forall \cl{i}\in\parti: \forall \cl{j}\in\parti:i < j \Rightarrow \left( \forall a\in\cl{i}:\forall a'\in\cl{j}:(a',a)\notin P^+ \right)
\end{equation}
A partition $\parti$ that satisfies these two constraints is said to be induced by $P^+$.
A partition induced by $P^+$ can be constructed readily by first assigning each equivalence class from $\equiv_{P^+}$ to a class $\cl{i}$ and then sort the obtained classes to enforce Eq.\eqref{eq:sort-criterion}.
This construction shows that a partition induced by $P^+$ is a topological sort of the quotient set $\schema/\equiv_{P^+}$.
We can now state the following result.
\begin{theorem}
\label{theorem:partition}
For a schema $\schema$ and a set of FDs $\fds$ defined over $\schema$, let $\parti$ be a partition induced by $P^+$.
Then (i) $\fseqrep{\parti}{\fds}$ and (ii) $\notfseqrep{\parti'}{\fds}$ for any refinement $\parti'$ of $\parti$.
\end{theorem}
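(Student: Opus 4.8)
The plan is to reduce the set-based definition of forward repairability to an order condition on individual attributes and then read both parts off the construction of $\parti$ from $P^+$. First I would record a reformulation that holds for \emph{any} ordered partition: writing $\iota(c)$ for the index $i$ with $c\in\cl{i}$, I claim that $\fds$ is forward repairable over $\parti$ if and only if for every FD $\fda$ in the set and every $b\in X$ we have $\iota(b)\le\iota(a)$. The reason is that an FD $\fda$ belongs to $\fds[\schemai{i}]\setminus\fds[\schemai{i-1}]$ exactly for $i=\max\{\iota(c)\mid c\in X\cup\{a\}\}$, and for that $i$ the defining requirement $\rhs{\fd}\in\cl{i}$ says precisely that $a$ attains this maximum, i.e.\ $\iota(a)\ge\iota(b)$ for all $b\in X$. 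Throughout I work with the minimal cover $\fdcov$ on which the construction is based; the preprocessing guarantees $\fds$ carries no trivial or implied FDs, and for every $\fda$ in $\fdcov$ and every $b\in X$ the pair $(b,a)$ lies in $P$ by the second defining clause of $P$.

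For part (i) I would fix $\fda$ and $b\in X$ and note $(b,a)\in P\subseteq P^+$. If $\iota(b)>\iota(a)$, then putting $i=\iota(a)$, $j=\iota(b)$ with $i<j$, $a\in\cl{i}$ and $b\in\cl{j}$, the sort criterion \eqref{eq:sort-criterion} forces $(b,a)\notin P^+$, a contradiction. Hence $\iota(b)\le\iota(a)$ for every left-hand side attribute, so the reformulation gives $\fseqrep{\parti}{\fds}$.

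For part (ii) let $\parti'$ be a strict refinement of $\parti$. By the equivalence criterion \eqref{eq:equiv-criterion} the classes of $\parti$ are exactly the equivalence classes of $\equiv_{P^+}$, so some class is split and we obtain attributes $a,a'$ with $a\equiv_{P^+}a'$ lying in distinct classes of $\parti'$; order them so that $\iota'(a)<\iota'(a')$, where $\iota'$ is the index map of $\parti'$. I then exploit the direction $(a',a)\in P^+$, which unfolds into a chain $a'=e_0,\dots,e_p=a$ with each non-reflexive link $(e_j,e_{j+1})\in P$ arising from an FD $\fdxy{Y_j}{e_{j+1}}\in\fdcov$ with $e_j\in Y_j$. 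Since $\iota'(e_0)>\iota'(e_p)$, at least one link strictly decreases the index, say $\iota'(e_j)>\iota'(e_{j+1})$; the associated FD then has a left-hand side attribute $e_j$ sitting strictly after its right-hand side $e_{j+1}$, which violates the reformulated condition and therefore establishes $\notfseqrep{\parti'}{\fds}$.

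The delicate points I expect to spend the most care on are two. The reformulation must be stated for the minimal cover rather than for an arbitrary $\fds$: the claim genuinely fails once redundant FDs are present (a superfluous $\fdxy{\{a,b\}}{c}$ implied by $\fdxy{a}{c}$ can force the right-hand side out of its class), so reducing to $\fdcov$ is essential rather than cosmetic. The second is the choice of chain direction in part (ii): one must traverse the $(a',a)$ chain, the one whose index ends lower than it starts, so that a \emph{strictly decreasing} link is guaranteed; discarding reflexive links ensures this decrease occurs at a genuine FD of $\fdcov$, which is what produces the forward-repairability violation.
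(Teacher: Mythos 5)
Your proof is correct, and its skeleton matches the paper's: part (i) follows from the sort criterion Eq.~\eqref{eq:sort-criterion} applied to the pairs $(b,a)\in P$ contributed by each FD, and part (ii) exploits the fact that $P^+$ is the transitive closure of $P$. But your execution differs in two ways worth recording. First, you isolate an explicit reformulation --- forward repairability of an ordered partition is equivalent to $\iota(b)\le\iota(a)$ for every $\fda$ in the cover and every $b\in X$ --- which the paper uses only implicitly; this is what turns both parts into short index computations. You are also right that this equivalence, and hence the theorem as literally stated, only holds for the minimal cover $\fdcov$ and not for an arbitrary $\fds$ containing redundant FDs; the paper silently makes the same substitution in its proof of (i) (it quantifies over $\fdcov[\schemai{i}]\setminus\fdcov[\schemai{i-1}]$) without comment, so flagging this is a genuine improvement. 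Second, and more substantially, your part (ii) is a direct argument: you unfold a witnessing chain for $(a',a)\in P^+$, observe that some non-reflexive link must strictly decrease the refined index, and read off a concrete FD of $\fdcov$ whose left-hand-side attribute sits strictly after its right-hand side. The paper instead argues by cases on whether the refined equivalence relation ``still contains $P$,'' which is left rather vague (it is not even clear what it means for an equivalence relation to contain the non-symmetric relation $P$); your chain argument is the rigorous version of what that case analysis gestures at, and it has the added benefit of working for an arbitrary ordering of the refined classes without further discussion.
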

\begin{figure}[!htb]
  \centering
  \includegraphics[width=0.70\columnwidth]{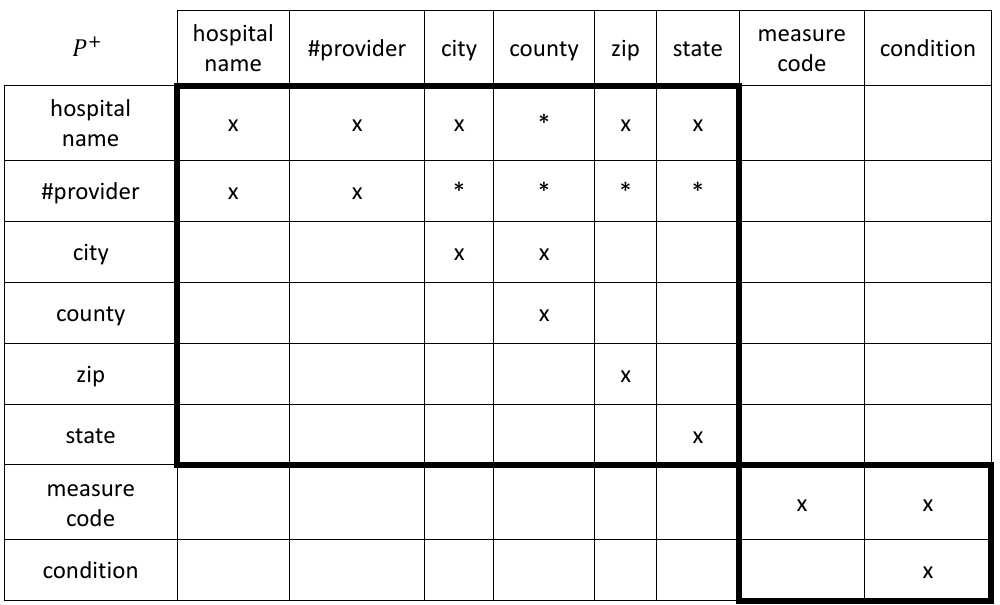}
  \caption{The construction of a preorder $P^+$ for the FDs from Figure~\ref{fig:example} (middle).
  Elements derived directly from the FDs are marked with `x' and elements added to compute the transitive closure are marked with `*'.
  The preorder is the union of two weak orders, marked by bold lines.}
  \label{fig:preorder}
\end{figure}

\begin{example}
For the FDs shown in Figure~\ref{fig:example} (middle), Figure~\ref{fig:preorder} shows the construction of the preorder $P^+$.
First, we derive $P$ from a minimal cover of the given FDs as explained.
In our example, the available FDs are already a minimal cover and thus we proceed with the FDs as given.
In Figure~\ref{fig:preorder}, the entries in $P$ are marked by `x' symbols.
We then compute the transitive closure $P^+$ and the additional entries to get $P^+$ are marked by `*' symbols.
Because $\mathsf{hospital\ name}\equiv_{P^+}\mathsf{\# provider}$, both attributes must belong to the same partition class.
Attributes determined by $\mathsf{hospital\ name}$ appear in singleton classes after the class that contains $\mathsf{hospital\ name}$ and $\mathsf{\# provider}$.
Finally, $\mathsf{condition}$ should be after $\mathsf{measure\ code}$.
\end{example}

\subsection{Priority repair}
\label{sec:class-repair}
Assume a partition $\parti=[\cl{1}\ldots\cl{m}]$ for $\schema$ such that $\fseqrep{\parti}{\fds}$ for a given set of FDs $\fds$.
Definition~\ref{def:forward-seq-repair} then ensures we can visit classes $\cl{i}$ in their natural order and generate a partial repair $\repairi{i}$ by changing only attributes from $\cl{i}$.
That is, we are given a partial repair $\repairi{i-1}$ and we must modify attributes $\cl{i}$ to fix violations of FDs $\fds[\schema_i]$ and obtain a new partial repair $\repairi{i}$.
To do this, we propose a technique called \emph{priority repairing}.
The main principles of this technique are the following:
\begin{itemize}
\item FDs are inspected and repaired in a specific order, dictated by a \emph{priority} model.
The key idea is to repair FDs that contain less reliable attributes in their right-hand side first and thereby try to maximize the accuracy of the repairs.
We propose a simple heuristic to estimate the reliability of an attribute by means of the estimated number of changes that an attribute requires.

\item In order to keep track of tuples that must receive the same value for $a\in\cl{i}$, we use a \emph{disjoint set forest} (DSF) \cite{Galler1964} to represent equivalence classes on tuples.
A DSF has the advantage that the necessary operations on equivalence classes have a near-constant time complexity \cite{Tarjan1975}.
This mitigates the complexity of FD \emph{revision}, where one FD must be repaired again because of possible new violations caused by the repair of another FD.

\item The value assigned to an equivalence class for $a\in\cl{i}$ is determined by a \emph{repair function} $\rf{a}$.
This is a function that maps the bag of values of $a$ involved in a violation onto a new value.
We have a particular interest in those repair functions that are \emph{preservative}.
These functions simply choose one of the values involved in a violation.

\end{itemize}
In the remainder of the section, we detail and formalize these main principles.

\paragraph{Priority model} In order to repair class $\cl{i}$, we must ensure that the FDs over $\fds[\schemai{i}]$ are satisfied.
Clearly, any FD from $\fds[\schemai{i-1}]$ contains no attributes from $\cl{i}$ and must therefore not be considered.
We can therefore restrict ourselves to the set $\fds_i:=\fds[\schemai{i}]\setminus\fds[\schemai{i-1}]$.
In this set there can be FDs $\fd$ for which:
\begin{equation}
\label{eq:pilot-fd}
\lhs{\fd}\cap\cl{i}=\emptyset.
\end{equation}
Those FDs have a left-hand side that contains only attributes that are clean at the time of repairing class $\cl{i}$ because these attributes are all part of a class $\cl{j}$ with $j<i$.
Such FDs are called \emph{pilot} FDs for class $\cl{i}$ and we repair them first, with two clear advantages.
First, because we rely on \emph{clean} data to make the first changes to attributes $a\in\cl{i}$, we expect these changes to be accurate.
Second, pilot FDs never require revision as the values of their left-hand side attributes never change.
Therefore, they need to be considered only once.
\begin{example}
In Example~\ref{ex:process}, there are no pilot FDs for class $\cl{1}$.
For all other classes, we have only pilot FDs because the left-hand side of the FDs contains no attributes that are part of the class.
\end{example}
If an FD is not a pilot FD for class $\cl{i}$, it contains at least one attribute from $\cl{i}$ in the left-hand side.
Those FDs are further sorted by means of a simple \emph{priority model} that encodes the order in which attributes must be repaired. 
This priority model is a simple total order $\succ$ on attributes $\cl{i}$ where $a \succ a'$ indicates that $a$ must be repaired prior to $a'$.
In other words, we require that $\fdxy{X}{a}$ is repaired \emph{before} $\fdxy{X'}{a'}$ whenever $a \succ a'$.
To construct a priority model, we propose to rank attributes within class $\cl{i}$ by increasing reliability.
To make such a ranking, the most obvious method would be to ask for input of a human supervisor, which is not uncommon in repair approaches \cite{Geerts2013,Geerts2019}.
However, such input might not be easy to collect, either because there is no supervisor, or a supervisor cannot easily rank attributes.
In that case, we propose to use an \emph{estimate} of the number of changes we will need to make to values of $a$ as follows.

Suppose we have an FD $\fd:=\fda$ and a relation $\rel$, then for any $x\in\rel[X]$, the bag $\bproj{\rel_{X=x}}{a}$ gives us the values for attribute $a$ of those tuples for which $X=x$.
If there are \emph{different} elements in this bag, we must change values for $a$ in order to satisfy $\fda$.
If we denote the element with \emph{highest} multiplicity\footnote{If there are multiple such values, we choose one at random.} in this bag by $\mathsf{mv}\left(x,a\right)$, then changing the values of $a$ for tuples $\rel_{X=x\wedge a\neq \mathsf{mv}\left(x,a\right)}$ allows to resolve the conflict with a \emph{minimal} amount of changes.
We can now collect these tuples for all values $x\in\rel[X]$:
\begin{equation}
\mathsf{VioFD}\left(\fda\right):=\bigcup_{x\in\rel[X]} \rel_{X=x\wedge a\neq \mathsf{mv}\left(x,a\right)},
\end{equation}
and finally aggregate over all FDs $\fd\in\fdcov$ where $\rhs{\fd} = a$:
\begin{equation}
\mathsf{Vio}\left(a\right):=\bigcup_{\fd\in\fdcov\wedge \rhs{\fd} = a} \mathsf{VioFD}\left(\fda\right)
\end{equation}
The set $\mathsf{Vio}\left(a\right)$ basically gives us the tuples in $\rel$ that require a change to attribute $a$ if (i) we are interested in keeping the number of changes as small as possible and (ii) we would consider each FD independently.
The more changes an attribute requires, the less reliable that attribute is.
We therefore use $|\mathsf{Vio}\left(a\right)|$ to build our priority model such that:
\begin{equation}
a \succ a' \Leftrightarrow |\mathsf{Vio}\left(a\right)| \geq |\mathsf{Vio}\left(a'\right)|
\end{equation}
The rationale of this heuristic is that since we now must sort non-pilot FDs, there are attributes from $\cl{i}$ that appear in the left-hand side of these FDs.
It is likely that some of these attributes will themselves contain errors (i.e., because they appear in the right-hand side of another FD).
We thus seek to first repair an FD that has attributes in the left-hand side, that are \emph{maximally} reliable.
We now make two observations about this heuristic.

First, the assumption about FD independence, is a naive one.
When repairing two FDs with the same right-hand side attribute $a$ we must merge equivalence classes induced by both FDs (this is explained later) and our estimate neglects this.
We accept this error in our estimate at the benefit of simple and efficient computation.
Moreover, we will show empirically that classes $\cl{i}$ are in practical scenarios often singleton set, in which case the estimate must not be computed because all FDs are then by definition pilot FDs.
If we do have multiple attributes in $\cl{i}$, then empirical results show that the heuristic leads to sequences of FDs with repair quality superior to other possible sequences.

A second observation is that $|\mathsf{Vio}\left(a\right)|$ can be computed at different times with different results.
One could for example compute $|\mathsf{Vio}\left(a\right)|$ on the original relation $\rel$, before any modifications are done.
The advantage of doing so would be that estimates are unbiased and not dependent on changes made by the Swipe algorithm.
An alternative strategy is to postpone the computation of $\mathsf{Vio}\left(a\right)$ up to the moment where the class that contains $a$ is visited.
In other words, we then compute $\mathsf{Vio}\left(a\right)$ from the partial repair $\repairi{i-1}$.
Variations of this strategy are possible.
We can for example estimate reliability before repair of $\cl{i}$ starts or postpone it to the point where pilot FDs have been repaired.
Empirically, we have however not observed any significant differences between these variations on the data sets we tested.
In the remainder of this paper, we favor a strategy where estimation of reliability is done on \emph{partial repairs}.
\begin{example}
\label{ex:sorting}
In Example~\ref{ex:process}, we have $\cl{1}=\{\mathsf{hospital\ name}, \mathsf{\#provider}\}$.
Both attributes appear once in the right-hand side of an FD.
It can be verified that $\mathsf{Vio}\left(\mathsf{hospital\ name}\right)=\emptyset$ and $\mathsf{Vio}\left(\mathsf{\#provider}\right)$ is either $\{4,5\}$ or $\{4,6\}$.
It follows that $\mathsf{\#provider} \succ \mathsf{hospital\ name}$ and thus that we repair $\fdxyt{hospital\ name}{\#provider}$ first and then $\fdxyt{\#provider}{hospital\ name}$.
\end{example}
Example~\ref{ex:sorting} demonstrates (i) the estimation of reliability of attributes and (ii) the implied order on FDs.
Note that the two sorting principles we use (i.e., pilot FDs first and then least reliable right-hand side first) provides us with a \emph{partial} order on the FDs we must repair.
Any total order that \emph{contains} this partial order can be used in Swipe.

\paragraph{Tuple equivalence}
Having determined the order in which FDs are treated, we turn our attention to the actual repair process.
In order to repair attributes in $\cl{i}$, we consider for each $a\in\cl{i}$, an equivalence relation on the set of tuples from $\repairi{i-1}$.
We use these equivalence classes to keep track of which tuples must receive the same value for $a$ in order to satisfy FDs $\fd$ for which $\rhs{\fd} = a$.
More specifically, whenever two tuples are in the same equivalence class for attribute $a$, we require they must receive the same value for $a$.
Initially, each tuple is put in a separate singleton class.
As we fix violations of FDs of the form $\fda$, we require that tuples with the same value for $X$ also have the same value for $a$.
In other words, if we observe tuples in $\repairi{i-1}$ with equal values for $X$, we must \emph{merge} the corresponding classes in which these tuples appear.
In order to do this efficiently, we use a \emph{disjoint set forest} \cite{Galler1964} for each attribute $a$, denoted by $\dsf{a}$.
With this data structure, each equivalence class is represented by a \emph{tree} on tuple identifiers (i.e., values for $\tid$).
Each $\dsf{a}$ basically supports three operations: 
\begin{itemize}
\item Operation $\mathsf{makeset}\left(\mathsf{id}\right)$ adds a new singleton set (i.e., $\{\mathsf{id}\}$) to the forest.
Such a singleton set is represented by a one-node tree.
This operation is used to initialize each $\dsf{a}$.
\item Operation $\mathsf{find}\left(\mathsf{id}\right)$ finds the root $\tid$ of the tree in which $\mathsf{id}$ resides and is used to determine if two values are in the same class or not.
\item Operation $\mathsf{union}\left(\mathsf{id}_1, \mathsf{id}_2\right)$ merges the classes in which $\mathsf{id}_1$ and $\mathsf{id}_2$ reside.
This is done by finding the root nodes of the trees and in case the root nodes are distinct, attach one root as a child of the other.
During the repair of an FD $\fda$, this operation is used to merge equivalence classes for tuples that have equal values for $X$.
\end{itemize}
A key result is that the three elementary operations of a disjoint set forest can be implemented such that their \emph{asymptotic} time complexity is very efficient.
More specifically, one can provide an implementation such that the asymptotic time complexity of  $\mathsf{makeset}\left(\mathsf{id}\right)$ and $\mathsf{union}\left(\mathsf{id}_1, \mathsf{id}_2\right)$ is $\mathcal{O}\left(1\right)$, while that of $\mathsf{find}\left(\mathsf{id}\right)$ is $\mathcal{O}\left(\alpha(n)\right)$ where $\alpha(.)$ is an inverse functional of the Ackermann function and $n$ is the number of elements in the class where $\mathsf{id}$ is located \cite{Tarjan1975,Tarjan1984,BenAmram2011}.
The asymptotic efficiency of these operations is key because it allows us to provide an efficient strategy for the update of $\dsf{a}$ during the repair of an FD $\fda$.
More specifically, if we want to repair $\fda$, then we must update $\dsf{a}$ such that after the update we have:
\begin{equation}
\forall r\in\repairi{i-1}:\forall r'\in\repairi{i-1}:r[X]=r'[X]\Rightarrow \mathsf{find}\left(r[\tid]\right) = \mathsf{find}\left(r'[\tid]\right)
\end{equation}
In words, after the update is done, we want that any two tuples from $\repairi{i-1}$ with equal values for $X$ are equivalent in $\dsf{a}$ and thus must receive the same value for $a$.
\begin{algorithm}[!htb]
\caption{Update $\dsf{a}$ for $\fda$ and $\repairi{i-1}$}
\label{algo:update-dsf}
\begin{algorithmic}[1]
\Procedure{update}{$\repairi{i-1}, \fda, \dsf{a}$}
\State $H \gets []$ \Comment{Initialize empty map} \label{line:init-map}
\For{$r\in\repairi{i-1}$}
    \If{$H[r[X]] = \textsc{null}$}
        \State $H[r[X]] \gets \dsf{a}.\mathsf{find}\left(r[\tid]\right)$ \label{line:register} \Comment{Register root}
    \Else
        \State $\dsf{a}.\mathsf{union}\left(r[\tid], H[r[X]]\right)$ \label{line:merge} \Comment{Merge classes}
        \State $H[r[X]] \gets \dsf{a}.\mathsf{find}\left(r[\tid]\right)$ \label{line:update-map}
    \EndIf
\EndFor
\EndProcedure
\end{algorithmic}
\end{algorithm}

The pseudo-code shown in Algorithm~\ref{algo:update-dsf} provides a simple way to do so by using a map $H$ with constant-time complexity for $\mathsf{get}$ and $\mathsf{put}$ operations.
This map keeps, for each value in $\repairi{i-1}[X]$, the root node of the tree in $\dsf{a}$ in which tuples with this value reside.
We then iterate over tuples $r\in\repairi{i-1}$ and check if $H$ currently contains $r[X]$.
If this is not the case, we register the root node of the tree where $r[\tid]$ is currently located (line~\ref{line:register}).
Else, we merge the current root node with the tree where $r[\tid]$ is currently located and update the map $H$ (lines~\ref{line:merge} - \ref{line:update-map}).
The update procedure described here requires in each step one $\mathsf{find}$ operation and possibly one $\mathsf{union}$ operation on $\dsf{a}$, which implies an asymptotic complexity of $\mathcal{O}\left(|\repairi{i-1}|\cdot\alpha\left(n\right)\right)$, with $n$ the size of the largest equivalence class in $\dsf{a}$.
\begin{example}
\label{ex:update-dsf}
In Example~\ref{ex:sorting} it is shown that during the visit of $\cl{1}$, we first repair $\fdxyt{hospital\ name}{\#provider}$.
This is the first FD we visit, which means $\dsf{\mathsf{\#provider}}$ contains a singleton class for each tuple.
Then application of
\[\mathbf{update}\left(\rel, \fdxyt{hospital\ name}{\#provider}, \dsf{\mathsf{\#provider}}\right)\]
modifies $\dsf{\mathsf{\#provider}}$ in such a way that tuples with equal values for $\mathsf{hospital\ name}$ are in the same class.
In other words, after the update, $\dsf{\mathsf{\#provider}}$ has classes $\{\{1,2,3,4\},\{5,6\}\}$.
\end{example}

\paragraph{Fixing violations with repair functions} 
After updating $\dsf{a}$ for some FD $\fda$ with Algorithm~\ref{algo:update-dsf}, $\dsf{a}$ represents classes of tuple identifiers from $\repairi{i-1}$ that must have the same value for $a$.
In other words, $\dsf{a}$ corresponds with set $\{\mathcal{E}_1,\ldots,\mathcal{E}_\ell\}$ such that $\forall i\in\{1,\ldots,\ell\}: \mathcal{E}_i\subseteq \repairi{i-1}[\tid]$.
For each such set $\mathcal{E}$, we can get the corresponding tuples from $\repairi{i-1}$ and verify whether or not these tuples have equal values for $a$.
If not, the tuples violate $\fda$ and we fix this violation by application of a repair function  defined by $\rf{a}: \bags{A} \rightarrow A$, where $\bags{A}$ is the set of all \emph{bags} (or \emph{multisets}) with elements from $A$.
A repair function maps a bag of values from $A$ onto a single value\footnote{In the literature on repairing FDs, many existing approaches allow labelled variables during repair, but we restrict ourselves to constant values only.} from the domain $A$.
A particular class of repair functions are those that are \emph{preservative}, which means that $\rfapp{a}{B}\in B$ for any bag $B\in\bags{A}$ \cite{Bronselaer2010}.
Our interest in preservative repair functions is not coincidental.
First of all, preservative repair functions possess important properties: they are \emph{idempotent} ($\rho\{v,\ldots,v\}=v$), they ensure that for any repair $\repair$, we have $\forall a\in \schema: \repair[a]\subseteq \rel[a]$ (i.e., they do not introduce new values in the repair) and they play an important role in notion of cardinality-minimal repairs \cite{Beskales2010}.
Second, preservative repair functions have some well-known representatives like majority voting, weighted voting and median or min/max selection for domains equipped with an order relation.
Finally, we will show in the following that they play a key role in the avoidance of revisions for unary FDs.

\begin{algorithm}[!htb]
\caption{Fix violations of $\fda$}
\label{algo:fix}
\begin{algorithmic}[1]
\Procedure{fix}{$\repairi{i-1}, \fda, \dsf{a}$}
\State $\textbf{update}\left(\repairi{i-1}, \fda, \dsf{a} \right)$ \Comment{Algorithm~\ref{algo:update-dsf}} \label{line:update-equi}
\State $\mathsf{fixes}\gets 0$ \Comment{Count violations}
\For{$\mathcal{E}\in \dsf{a}$} \Comment{For each class in $\dsf{a}$}
\State $\repair_\mathcal{E}\gets \{r\mid r\in\repairi{i-1}\wedge r[\tid] \in \mathcal{E}\}$ \label{line:get-tuples}
\If{$|\repair_\mathcal{E}[a]| > 1$} \Comment{Violation check} \label{line:violation}
    \State $v_\mathsf{fix}\gets \rf{a}\left(\bproj{\repair_{\mathcal{E}}}{a}\right)$ \label{line:repair-function}\Comment{Apply repair function}
    \State \textbf{exec update} $\repairi{i-1}$ \textbf{set} $a=v_\mathsf{fix}$ \textbf{where} $\tid\in\mathcal{E}$ \label{line:fix}\Comment{Fix violation}
    \State $\mathsf{fixes}$++
\EndIf
\EndFor
\State \textbf{return} $\mathsf{fixes}$
\EndProcedure
\end{algorithmic}
\end{algorithm}

Algorithm~\ref{algo:fix} provides the pseudo-code for the resolution of violations of $\fda$.
It first updates the $\dsf{a}$ by using Algorithm~\ref{algo:update-dsf}.
Then, for each class of tuple identifiers in the updated $\dsf{a}$, it gets the tuples from $\repair$ with identifiers in this class and stores these tuples in a variable $\repair_\mathcal{E}$ (line~\ref{line:get-tuples}).
These tuples are inspected for their values for $a$ (line~\ref{line:violation}).
If a violation is observed, the repair function for $a$ is applied to the bag of values $\bproj{\repair_{\mathcal{E}}}{a}$ and the result is stored in the variable $v_\mathsf{fix}$ (line~\ref{line:repair-function}).
This value is then assigned to attribute $a$ for all the tuples of $\repairi{i-1}$ involved in the violation (line~\ref{line:fix})
\begin{example}
In Example~\ref{ex:update-dsf} we showed that $\dsf{\mathsf{\#provider}}$ has two equivalence classes.
Suppose we choose as repair function majority voting with random tie breaking then fixing $\fdxyt{hospital\ name}{\#provider}$ must change the value of $\mathsf{\#provider}$ for tuple $4$ into $\mathsf{10006}$.
In addition, we must change also tuple $5$ or $6$.
\end{example}

\paragraph{Priority repair} We now have all components in place to present an algorithm that performs priority repair on a class of attributes $\cl{i}$.
The pseudo-code for this algorithm is shown in Algorithm~\ref{algo:priority-repair}.
The algorithm receives a class of attributes $\cl{i}$ and the corresponding set of FDs $\fds_i :=\fds[\schemai{i}] \setminus \fds[\schemai{i-1}]$, which contains only those FDs that are possibly still violated in $\repairi{i-1}$.
The algorithm starts with sorting the FDs according to the principles explained before: we prioritize pilot FDs (see Eq.~\eqref{eq:pilot-fd}) followed by the other FDs, which are sorted by their right-hand side attribute based on the cardinality of the set $\mathsf{Vio}(.)$.
FDs are added to a stack that maintains this order \emph{at all time} (line~\ref{line:sort-fds}).
\begin{algorithm}[!htb]
\caption{Repair FDs $\fds_i$ by changing attributes $\cl{i}$ to obtain partial repair $\repairi{i}$}
\label{algo:priority-repair}
\begin{algorithmic}[1]

\Require{$\cl{i}$ is a class of $\parti$ such that $\fseqrep{\parti}{\fds}$}
\Procedure{priorityRepair}{$\repairi{i-1}, \fds_i, \cl{i}$}
\State $\textbf{init}\left(\mathbb{S}, \fds_i\right)$ \Comment{Sorted stack of FDs} \label{line:sort-fds}
\State $\forall a\in\cl{i}: \textbf{init}\left(\dsf{a}\right)$ \Comment{Initialize DSF with singleton classes}\label{line:init-dsf}
\While{$\mathbb{S}\neq\emptyset$}
    \State $\fd\gets\textbf{poll}\left(\mathbb{S}\right)$
    \State $\mathsf{fixes}\gets\textbf{fix}\left(\repairi{i-1}, \fd, \dsf{\rhs{\fd}}\right)$ \Comment{Fix step} \label{line:fix-violations}
    \If{$\mathsf{fixes}\neq 0$} \Comment{Revision step}
        \For{$\fd'\in\left(\fds_i\setminus \mathbb{S}\right)$}
            \If{$\rhs{\fd} \in \lhs{\fd'}$} \label{line:revision}
                \State $\textbf{add}\left(\mathbb{S}, \fd'\right)$ 
            \EndIf
        \EndFor
    \EndIf 
\EndWhile
\State \textbf{return} $\repairi{i-1}$
\EndProcedure
\end{algorithmic}
\end{algorithm}

Next, we initialize a DSF structure for each $a\in\cl{i}$.
This initialization creates a singleton class for each $r\in\repairi{i-1}$ and thus requires $|\cl{i}|\cdot|\repairi{i-1}|$ $\mathsf{makeset}$ operations (line~\ref{line:init-dsf}).
We then start a loop that continues until the stack with FDs is empty.
In each iteration, we poll the first FD from the stack and perform two steps: the fix step and the revision step.
In the \emph{fix} step, we retrieve the current DSF structure for $\rhs{\fd}$ and we apply Algorithm~\ref{algo:fix} to fix any violations (line~\ref{line:fix-violations}).
In the \emph{revision} step, we verify if the fix step made changes to $\repairi{i-1}$. 
If this is the case, then the values for $\rhs{\fd}$ in $\repairi{i-1}$ have been modified.
We then check if there are FDs $\fd'$ that have $\rhs{\fd}$ in the left-hand side and add those FDs back to the stack to check if new violations have been introduced (line~\ref{line:revision}).
This continues until no new violations are found.

Algorithm~\ref{algo:priority-repair} can now be attributed with the following properties.
First of all, for each attribute $a$, the DSF structure is initialized only once and during each update (Algorithm~\ref{algo:update-dsf}), we change $\dsf{a}$ only by merging classes.
In particular, this means that $\dsf{a}$ \emph{before} update is a partition refinement of $\dsf{a}$ \emph{after} update.
As such, any FDs that were previously repaired and have $a$ in the right-hand side, remain satisfied.
For FDs that have $a$ in the left-hand side, this might not be the case and these FDs are added to the stack for revision (only if $\mathsf{fixes}>0$).
During such a revision, $\dsf{a}$ can again only change by merging classes, which means that in consecutive revisions, the number of classes in $\dsf{a}$ is monotonically decreasing.
This leads us to the formulation of the following theorem.
\begin{theorem}
\label{theorem:termination}
Let $\parti=[\cl{1}\ldots\cl{m}]$ be a partition of $\schema$ such that $\fseqrep{\parti}{\fds}$ for a set of FDs $\fds$.
Then for any class $\cl{i}$ from $\parti$ and any partial repair $\repairi{i-1}$, we have (i) execution of $\mathbf{priorityRepair}\left(\repairi{i-1}, \fds_i, \cl{i}\right)$ (Algorithm~\ref{algo:priority-repair}) terminates and (ii) after termination $\repairi{i-1}$ is modified into a partial repair $\repairi{i}$.
\end{theorem}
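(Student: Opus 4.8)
The plan is to prove the two claims on top of one structural invariant about the disjoint set forests: \emph{before each poll from $\mathbb{S}$, every class of every $\dsf{a}$ with $a\in\cl{i}$ is uniform}, meaning all tuples whose $\tid$ lies in a common class currently carry the same value for $a$. This holds at entry to the loop because line~\ref{line:init-dsf} initialises each $\dsf{a}$ with singleton classes, and it is restored after every \textbf{fix} call: Algorithm~\ref{algo:fix} visits every class of $\dsf{\rhs{\fd}}$ and applies $\rf{\rhs{\fd}}$ to each non-uniform class (line~\ref{line:repair-function}), while leaving every other forest $\dsf{a'}$ and every other attribute value untouched. I would record this uniformity observation first, as both parts of the theorem lean on it.

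For termination (part (i)), I would exhibit a measure that strictly decreases on every loop iteration. Let $N$ be the total number of classes summed over all $\dsf{a}$ with $a\in\cl{i}$, and let $S=|\mathbb{S}|$; consider the pair $(N,S)$ ordered lexicographically over $\mathbb{N}\times\mathbb{N}$, which is well-founded. Each call to \textbf{update} (Algorithm~\ref{algo:update-dsf}) alters a forest only through $\mathsf{union}$, so $N$ never increases. The crux is the implication $\mathsf{fixes}>0\Rightarrow N$ strictly decreases: by the uniformity invariant a class can exhibit a violation at line~\ref{line:violation} only if \textbf{update} merged two previously-uniform classes carrying different values, and such a merge drops $N$ by at least one. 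Hence when $\mathsf{fixes}>0$ the primary component $N$ falls and $(N,S)$ decreases no matter how many FDs the revision step pushes; when $\mathsf{fixes}=0$ no FD is pushed, so $S$ drops by one from the poll while $N$ is unchanged. Either way $(N,S)$ strictly decreases, giving termination.

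For correctness (part (ii)), I would maintain the loop invariant \emph{every $\fd\in\fds_i\setminus\mathbb{S}$ is satisfied by the current relation}. It holds trivially at entry, where line~\ref{line:sort-fds} places all of $\fds_i$ on $\mathbb{S}$. For the inductive step, polling $\fd$ moves it into $\fds_i\setminus\mathbb{S}$, and after \textbf{fix} the uniformity invariant together with the merging performed by \textbf{update} forces tuples with equal $\lhs{\fd}$ into one now-uniform class, so $\fd$ is satisfied (and since $\rhs{\fd}\notin\lhs{\fd}$ it is not re-pushed). The only values altered are those of $a:=\rhs{\fd}$, so I would classify the remaining off-stack FDs $\fd''$ by how $a$ touches them: if $a=\rhs{\fd''}$ then $\fd''$ stays satisfied because $\dsf{a}$ only ever coarsens (every update merges classes, so the partition recorded when $\fd''$ was last satisfied refines the current one); tuples formerly forced to share a class still do, and that class is now uniform. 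If $a\notin\lhs{\fd''}\cup\{\rhs{\fd''}\}$ the change is irrelevant to $\fd''$. Finally, if $a\in\lhs{\fd''}$ then $\fd''$ may break, but precisely those FDs are re-pushed by the revision step (line~\ref{line:revision}) whenever $\mathsf{fixes}>0$, so they leave $\fds_i\setminus\mathbb{S}$. At termination $\mathbb{S}=\emptyset$, hence every FD of $\fds_i$ is satisfied; since the algorithm modifies only attributes of $\cl{i}$, which are disjoint from $\schemai{i-1}$, all FDs of $\fds[\schemai{i-1}]$ remain satisfied and $\repairi{i-1}[\schemai{i-1}]$ is unchanged. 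Combining these yields $\repairi{i-1}\models\fds[\schemai{i}]$, exhibiting the modified relation as a partial repair $\repairi{i}$; forward repairability $\fseqrep{\parti}{\fds}$ is what guarantees $\rhs{\fd}\in\cl{i}$ for each $\fd\in\fds_i$, ensuring every fix indeed lands inside $\cl{i}$.

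The main obstacle I anticipate is the same-right-hand-side case of the correctness invariant, namely arguing that fixing one FD never silently re-breaks an already-satisfied FD sharing its right-hand side. This is exactly where the merge-only (partition-refinement) behaviour of the disjoint set forest is indispensable, and it must be stated and invoked with care, since it is the reason the revision step may restrict its attention to FDs containing $\rhs{\fd}$ in their \emph{left}-hand side rather than revisiting every FD on that attribute.
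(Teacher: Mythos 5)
Your proof is correct and follows essentially the same route as the paper's: termination via the monotone, merge-only evolution of the disjoint set forests, and correctness via the invariant that every FD off the stack is currently satisfied, argued by the same three-way case split on how $\rhs{\fd}$ meets each previously fixed FD. Your write-up is in fact somewhat tighter than the paper's: the explicit uniformity invariant, the lexicographic measure $(N,S)$ with the key implication that $\mathsf{fixes}>0$ forces $N$ to drop, and the closing observation that attributes of $\cl{i}$ are disjoint from $\schemai{i-1}$ (so $\fds[\schemai{i-1}]$ remains satisfied and the result is genuinely a partial repair $\repairi{i}$) make explicit several steps the paper leaves informal.
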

Theorem~\ref{theorem:termination} states that even in case of revisions, we eventually end up with a partial repair $\repairi{i}$.
Nevertheless, the presence of revisions comes with a cost as it requires a new update of the DSF structure and potential modifications to the repair.
It is therefore clear that avoiding revisions has a computational advantage.
In that regard, we note that whenever $|\cl{i}|=1$, all FDs are necessarily pilot FDs and thus never require revision.
Moreover, in the case of unary FDs, revision is not needed if the repair function for the single left-hand side attribute is preservative.
\begin{proposition}
\label{prop:revision}
Let $\parti=[\cl{1}\ldots\cl{m}]$ be a partition of $\schema$ such that $\fseqrep{\parti}{\fds}$ for a set of FDs $\fds$.
Then for any class $\cl{i}$ from $\parti$ and any partial repair $\repairi{i-1}$, a unary FD $\fdxy{a'}{a}$ can be ignored in the revision step of $\mathbf{priorityRepair}\left(\repairi{i-1}, \fds_i, \cl{i}\right)$ if $\rf{a'}$ is preservative.
\end{proposition}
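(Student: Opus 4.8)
The plan is to reduce the statement to a single invariant. The revision step re-adds $\fdxy{a'}{a}$ to the stack only after a fix of some FD $\fd$ with $\rhs{\fd}=a'$ has made a change, and only when $\fdxy{a'}{a}$ has already been polled (i.e.\ repaired) earlier. Hence it suffices to prove that \emph{once $\fdxy{a'}{a}$ has been fixed, every subsequent fix of an FD with right-hand side $a'$ (using the preservative $\rf{a'}$) leaves $\fdxy{a'}{a}$ satisfied}. If this holds, then whenever the algorithm would re-add $\fdxy{a'}{a}$, running Algorithm~\ref{algo:fix} on it would return $\mathsf{fixes}=0$, so the re-addition is a no-op and may be skipped. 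First I would fix notation: let the fix of $\fd=\fdxy{Y}{a'}$ change $a'$-values, and compare the state just before and just after this fix, noting that $a$ itself is untouched by it.

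I would track the two forests $\dsf{a'}$ and $\dsf{a}$ and establish two supporting invariants. (I1) Each class of $\dsf{a'}$ carries a single value for $a'$: this is immediate, since every fix assigns one value per class through $\rf{a'}$ (line~\ref{line:fix} of Algorithm~\ref{algo:fix}) and $a'$ changes only through such fixes. (I2) Each class of $\dsf{a'}$ carries a single value for $a$. Granting (I1), (I2) and the prior satisfaction of $\fdxy{a'}{a}$, preservativeness closes the argument: the value $\rf{a'}$ assigns to a class is one \emph{already present} in that class, so if two tuples end with a common new value $v$ for $a'$, each lies in a $\dsf{a'}$-class containing a witness whose pre-fix value for $a'$ was $v$; the two witnesses had equal $a'$ and therefore equal $a$ (prior satisfaction of $\fdxy{a'}{a}$), and by (I2) each tuple inherits its witness's $a$-value, so the two tuples agree on $a$ and no new violation is created. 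I would also record that $\dsf{a}$ only ever merges, so later fixes of right-hand-side-$a$ FDs keep previously equated tuples together and cannot disturb the invariant.

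The main obstacle I anticipate is establishing (I2), and this is where the priority model and preservativeness truly enter. I would argue (I2) by induction over the fixes that merge $\dsf{a'}$. A fix of $\fdxy{Y}{a'}$ with $a\in Y$ merges only tuples that agree on $Y$, hence on $a$, so constancy is preserved; the delicate case is $a\notin Y$, where a merge may join classes with different $a$-values. The crux I would have to nail down is that every such merge happens \emph{before} $\fdxy{a'}{a}$ is repaired---because $\fd$ is then either a pilot FD (its left-hand side is clean and pilot FDs are processed first) or is placed before $\fdxy{a'}{a}$ by the reliability ordering $\succ$---so that the later fix of $\fdxy{a'}{a}$ equalises $a$ across each block of equal $a'$, which by (I1) is a union of $\dsf{a'}$-classes, thereby restoring (I2) before any $a\notin Y$ merge can recur. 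I expect this ordering sub-claim, reconciled with the fact that revisions can re-add and re-fix FDs, to be the genuinely hard and error-prone part of the proof; it is also exactly the point at which preservativeness of $\rf{a'}$ is indispensable, since a non-preservative function could mint a fresh $a'$-value shared by two classes with distinct $a$-values and break $\fdxy{a'}{a}$ outright.
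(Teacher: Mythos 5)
Your overall architecture is the same as the paper's: reduce the proposition to the claim that once $\fdxy{a'}{a}$ has been fixed, a later fix of some $\fdxy{X}{a'}$ with a preservative $\rf{a'}$ cannot re\-violate it, and close the argument with the witness trick (preservativeness puts the written value inside the bag, the witnesses agree on $a'$ before the fix, and prior satisfaction of $\fdxy{a'}{a}$ makes them agree on $a$). That final step is essentially the paper's Case~2, and your formulation via classes of $\dsf{a'}$ even unifies the paper's two cases. The problem is your invariant (I2) --- every class of $\dsf{a'}$ is constant on $a$ --- which is the load-bearing step: it plays exactly the role of the paper's claim that $\rbefore\models\fdxy{X}{a}$ (step~(6) of its proof), on which both of its cases rest. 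Your proposed justification of (I2) is that every merge of $\dsf{a'}$ along an FD $\fdxy{Y}{a'}$ with $a\notin Y$ happens \emph{before} $\fdxy{a'}{a}$ is repaired, because such an FD is either a pilot FD or ``is placed before $\fdxy{a'}{a}$ by $\succ$''. This is backwards. The revision of $\fdxy{a'}{a}$ is triggered only when an FD with right-hand side $a'$ is fixed while $\fdxy{a'}{a}$ is already off the stack; the paper's steps (4)--(5) derive from this that $a\succ a'$, i.e.\ $\fdxy{a'}{a}$ is polled \emph{first} and the offending $\fdxy{X}{a'}$ is polled \emph{after} it. (Pilot FDs with right-hand side $a'$ do run first, but for that very reason they never put $\fdxy{a'}{a}$ back on the stack and are irrelevant to the revision step.) So the one case your induction must handle --- a merge of $\dsf{a'}$ on a left-hand side not containing $a$, occurring \emph{after} $\fdxy{a'}{a}$ has been fixed --- is precisely the case your ordering claim excludes by fiat. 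You flag the reconciliation with revisions as the hard part, and you are right that it is, but as written the proposal does not close it.

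For comparison, the paper plugs this same hole by arguing that at the moment the offending fix is applied we already have $\rbefore\models\fdxy{a'}{a}$ and then invoking ``transitivity of FDs'' to conclude $\rbefore\models\fdxy{X}{a}$, which is what makes merged classes constant on $a$. Any repair of your (I2) would have to establish a statement of that kind about the state \emph{just before} the offending fix, rather than push all dangerous merges to before the first fix of $\fdxy{a'}{a}$. Be aware that this is the genuinely delicate point for either route: literal transitivity would also need $\rbefore\models\fdxy{X}{a'}$, and $\fdxy{X}{a'}$ is being fixed precisely because it is violated in $\rbefore$, so this step cannot be waved through --- it is where the real content of the proposition (and the role of the priority order) lives, and your writeup currently leaves it unproved.
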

The proposition above shows that if we use preservative repair functions, then repair of unary FDs is extremely efficient as it is revision-free.

\subsection{The Swipe algorithm}
\label{sec:swipe}
The ability to (i) construct a maximally refined partition that is forward repairable and (ii) repair classes of such a partition by means of a priority model can now be used to construct a repair algorithm for FDs that we call the Swipe algorithm.
The pseudo-code to repair a dirty relation $\rel$ in this way is shown in Algorithm~\ref{algo:swipe}.
The algorithm starts by computing a minimal cover of FDs in order to remove any redundant information that is present (line~\ref{line:minimal-cover}).
Based on such a minimal cover, we construct a partition $\parti=[\cl{1}\ldots\cl{m}]$ induced by the preorder $P^+$ (line~\ref{line:build-partition}) as explained in Section~\ref{sec:partition-building}.
We initialize the repair with the original dirty relation $\rel$.
This repair is then modified step by step.
More precisely, we iterate over the classes $\cl{i}$ of $\parti$ in the natural order and in each step apply the priority repair algorithm, where we give the current partial repair $\repairi{i-1}$ as input, together with those FDs that involve at least one attribute from $\cl{i}$ (line~\ref{line:priority-repair}).
During the execution of the priority repair algorithm, $\repairi{i-1}$ is modified by assigning values to attributes in $\cl{i}$ in such a way we get a new partial repair $\repairi{i}$.
Because the partition is forward repairable, we know that such a modification is possible in each iteration.
Eventually, we obtain a repair $\repair:=\repairi{m}$ that satisfies all FDs.
\begin{algorithm}[!htb]
\caption{Swipe}
\label{algo:swipe}
\begin{algorithmic}[1]
\Procedure{swipe}{$\rel,\fds$}
\State $\fdcov \gets \textbf{minimalCover}\left(\fds\right)$\Comment{Minimal cover}\label{line:minimal-cover}
\State $\parti \gets \textbf{buildFromPreorder}\left(\fdcov \right)$ \Comment{Construct $\parti$ from preorder $P^+$ on $\fds$}\label{line:build-partition}
\State $\repairi{0} \gets \rel$\Comment{Initialize repair}
\For{$i\in \{1,\ldots,|\parti|\}$}
    \State $\repairi{i} \gets \textbf{priorityRepair}(\repairi{i-1}, \fdcov [\schemai{i}] \setminus \fdcov [\schemai{i-1}], \cl{i})$ \Comment{Priority repair on $\cl{i}$ for new FDs} \label{line:priority-repair}
\EndFor
\State \Return $\repairi{m}$
\EndProcedure
\end{algorithmic}
\end{algorithm}

\section{Experimental evaluation}
\label{sec:experiments}
In this section, we report the results of an empirical study that scrutinizes the efficiency and accuracy of the Swipe algorithm.
Because the intent of the paper is to optimize the trade-off between repair quality and computational cost in a Chase-like procedure, we are primarily interested in comparing the Swipe algorithm (a single-sequence Chase procedure) with Llunatic (a multi-sequence Chase procedure).
The goal is hereby to seek evidence for the hypothesis that the Swipe algorithm allows to find good repairs of FD violations in an efficient manner and that exploring many repairs does not necessarily improve the quality of the final repair.
To that end, the study assesses both the quality of repairs in terms of precision, recall and $F$-score (Section~\ref{sec:precision-recall}) as well as the scalability of the Swipe algorithm (Section~\ref{sec:scalability}).
All experiments reported below were executed on a machine running Ubuntu 20.04.1 with 64GB of RAM memory, 12 cores (i9-10920X/3.50GHz) and two 500 GB Solid State Disks in mirror (WDS500G3XHC-00SJG0).

\subsection{Data sets}
\label{sec:exp-data}
\paragraph{Real-life data sets} The study involves four real-life data sets, which differ in size, number of errors and number of considered constraints.
Two of them (Hospital \cite{Dallachiesa2013,Rekatsinas2017,Geerts2019} and Flight \cite{Li2013,Geerts2019,BronselaerAcosta2023}) are commonly used in experiments on data quality, while the others have been composed and used in previous work by the authors.
Each data set is accompanied by a gold standard that contains the correct tuples for a sample of the data set.
\begin{table}[!ht]
\caption{Summary of data sets used in the experiments. Repairable attributes are attributes that occur in at least one FD. Cells in error are cells for which the value is different in $\rel_{\mathsf{gold}}$ compared to $\rel$.}
\label{tab:dataset-summary}
\begin{tabular}{lcccc}
\toprule
                                                                    & Hospital & Allergen & Eudract & Flight \\
\midrule
\# rows                                                             & $1000$     & $1160$     & $86670$   & $776067$ \\ 
\# rows w. ground truth & $1000$     & $206$      & $3133$    & $70951$ \\ 
\# attributes                                                       & $19$       & $22$       & 9       & 5      \\ 
\begin{tabular}[c]{@{}l@{}}\# repairable\\ attributes\end{tabular}  & $15$       & $22$       & 9       & 5      \\ 
\# FDs                                                              & $15$       & $21$       & 11      & 4      \\ 
\# cells in error                                                   & $509$ ($2.68\%$)     & $358$ ($8.28\%$)      & $2962$ ($11.82\%$)    & $123799$ ($43.62\%$) \\
\# cells with \textsc{null}                        & 0        & 0        & $1722$ ($6.87\%$)   & $59837$ ($21.08\%$)  \\
\bottomrule
\end{tabular}
\end{table}

Table~\ref{tab:dataset-summary} provides a summary of the data set descriptions, including the size of the gold standard and the number of constraints considered in our experiments. 
More details on the origin of these data sets and their gold standard, as well as full downloads, are openly available\footnote{\label{footnote:reproducibility}\url{https://gitlab.com/antoonbronselaer/swipe-reproducibility}}.
A short description for each data set is included below.
\begin{itemize}
\item \textbf{Hospital.} A benchmark data set often used in literature on data cleaning \cite{Dallachiesa2013,Rekatsinas2017}.
The original data \cite{Dallachiesa2013} stems from the U.S. Department of Health and
Human Services\footnote{\url{http://www.hospitalcompare.hhs.gov}}.
We use the version consisting of 1000 tuples that has been used before in data repairing experiments \cite{Rekatsinas2017}.
The running example from Figure~\ref{fig:example} is a sample from this data set.
\item \textbf{Allergen.} A data set containing data regarding allergens of food products as found on two different websites \cite{BronselaerAcosta2023}.
The attributes indicate the presence (`2'), traces (`1'), or absence (`0') of allergens in a product which can be identified by its barcode.
A gold standard was compiled by manual consultation of pictures of a sample of products \cite{BronselaerAcosta2023}.
\item  \textbf{Eudract.} A data set with data about clinical trials conducted in Europe \cite{Bronselaer2022,BronselaerAcosta2023} and obtained from the Eudract register\footnote{\url{https://eudract.ema.europa.eu/}}.
A gold standard was compiled based on the information of the same studies in other registries.

\item \textbf{Flight.}  A data set that describes flights annotated with the departing and arrival airports as well as their expected and actual time of departure and arrival.
The original data set was used in the context of data fusion \cite{Li2013,Dong2009,Dong2012} and is available online\footnote{\url{http://lunadong.com/fusionDataSets.htm}}.
In the current study, we use the version and gold standard proposed in recent work \cite{BronselaerAcosta2023}.
\end{itemize}

\paragraph{Artificial data sets} In order to assess scalability, artificial data sets (i.e., an artificial relation $\rel$ paired with a set of constraints $\fds$) are used in Section~\ref{sec:scalability}.
The process to generate these data sets takes two parameters: the number of attributes (i.e., $|\schema|$) and the number of tuples (i.e., $|\rel|$).
For a given number of tuples and attributes, we generate tuples by randomly assigning values to attributes under the condition that $|A|=10$ for each attribute $a$.
By choosing the domain size sufficiently small, we ensure that there are many violations of FDs that need repairing.
Clearly, if we would increase domain sizes $|A|$ we would obtain less violations and observe faster run times.
Once relations are generated, we generated a set $\fds$ such that $|\fds| = |\schema|$, meaning that the number of FDs is equal to the number of attributes.
This reflects the fact that for larger schemata, one can expect more dependencies that must be satisfied.
To generate an FD, we first choose the size of the left-hand side by sampling from a uniform distribution $\mathsf{unif}\{1,\lceil|\schema|/10\rceil\}$.
This reflects that for larger schemata, it becomes more likely to have FDs with multiple attributes in the left-hand side.
We then sample for that FD the required number of attributes uniformly from $\schema$ to compose the left-hand side.
The right-hand side attribute is chosen from the remaining attributes.
When a dirty relation $\rel$ and a set of FDs has been generated, we apply Swipe to obtain a repair $\repair$.

\subsection{Repair techniques}

We compare the Swipe algorithm with Llunatic~\cite{Geerts2019}. 
The data sets were stored in a \href{https://www.postgresql.org/}{PostgreSQL} database in separate schemas.
Note that we used PostgreSQL version 9.5 because Llunatic relies on the \texttt{``with oids''} option during table creation, which is abandoned as of version 12\footnote{See \url{https://www.postgresql.org/docs/12/release-12.html}, section E.13.2}.

\paragraph{Llunatic.} Llunatic is a general purpose Chase engine that builds a Chase tree where nodes correspond to (sequences of) repair steps and leaf nodes represent repairs.
The implementation allows configuration of (i) pruning strategies to keep the search efficient and (ii) a partial order that indicates preferences of values for repair.
We use the most recent version (2.0) available at \href{https://github.com/donatellosantoro/Llunatic}{GitHub}\footnote{\url{https://github.com/donatellosantoro/Llunatic}}.
We made a local build of the Llunatic code in order to be able to run it from the command line on an \texttt{lxc} container.
We encoded the available FDs for each data set in separate configuration files.
As Llunatic allows for a wide range of configuration options, we decided to create two configurations for each data set:
\begin{itemize}
\item The first configuration is the standard one (S) and is the same for each data set.
It searches for forward repairs only and we set both the branching threshold and the potential solutions threshold to $2$.
This means that the Chase tree is binary and we stop searching after two repairs have been found.
The partial order to resolve conflicts is based on the frequency of values.
The main idea of this standard configuration is that it resembles the search strategy of Swipe.
More specifically, it resolves conflicts in the same way as the default method in Swipe (i.e., majority voting), it uses forward repairing only and it constructs few sequences.
The main difference is thus that Swipe will construct a specific order on the FDs and uses no variables.

\item The second configuration is a fine-tuned configuration (FT) that is tailored to each data set individually.
First of all, we allowed backward repairing for Hospital, Eudract and Flight to see if this helps in finding better repairs.
We did not allow this for Allergen as there we wanted to focus on mirroring the behaviour of Swipe when using the $\mathsf{max}$ repair function.
Second, to account for typographical errors that are present in the attributes of Hospital and Flight, we used a cost manager that uses Smith-Waterman string similarity between values to guide resolution.
For Hospital, Eudract and Flight, we also increased the branching threshold such that a larger part of the search space is explored and more repairs are searched for.
\end{itemize}
A summary of the main differences of the FT configurations is given for each data set in Table~\ref{tab:configurations}.
The full configuration files are available online and can be used for reproduction purposes \footref{footnote:reproducibility}.
\begin{table}[!t]
\caption{Configuration details of LLunatic (FT) for each data set.}
\label{tab:configurations}
\centering
\begin{tabular}{lc}
\toprule
Data set  & Llunatic (FT) \\
\midrule
Hospital & \begin{tabular}[c]{@{}c@{}} Similarity cost manager \\ Higher branching threshold\\ More potential solutions \\Backward repairs allowed\end{tabular}\\ \hline
Allergen & Mimic $\mathsf{max}$ repair function\\ \hline
Eudract  & \begin{tabular}[c]{@{}c@{}}Higher branching threshold\\ More potential solutions \\ Backward repairs allowed\end{tabular}\\ \hline
Flight   & \begin{tabular}[c]{@{}c@{}} Similarity cost manager \\Higher branching threshold\\ More potential solutions \\ Backward repairs allowed\end{tabular}\\ \bottomrule
\end{tabular}
\end{table}
\paragraph{Swipe.} The Swipe algorithm was implemented in Java (Version 8 or higher) as part of the $\mathsf{ledc}$ framework\footnote{\href{https://gitlab.com/ledc/ledc-fundy}{https://gitlab.com/ledc/ledc-fundy}}, which is an open-source framework that bundles various techniques for data quality under one umbrella project.
The implementation of Swipe contains a set of basic repair functions.
We consider three of them in our experiments:
\begin{itemize}
\item The first one is a \emph{majority} voting function ($\mathsf{mv}$) with random tie breaking.
This repair function was used in the examples throughout this paper and is considered a default choice.
\item The second one is a \emph{weighted} voting function ($\mathsf{wv}$) with random tie breaking.
The weight of a value in the bag is based on the number of \textsc{null} values that occur in the row where that value is taken.
More specifically, if a value from the bag is taken from a tuple with $N$ \textsc{null} values, the weight for that value is equal to $\left(|\schema| - N\right)^4$.
We include this repair function as it has been shown effective in previous work \cite{BronselaerAcosta2023}.
\item The third repair function simply chooses the \emph{maximal} value ($\mathsf{max}$) and can be used in a scenario where attribute domains are equipped with an order relation that is relevant to the repair.
This is true for all datasets included in the experiment, except for the Hospital dataset, where using the natural order simply sorts the attribute values alphabetically.
\end{itemize}
For each data set, we run Swipe three times and in each run use a different repair function.
That repair function is then used for \emph{all} attributes in the data set.
Note that each of these three functions is \emph{preservative}, which means Proposition~\ref{prop:revision} applies in each scenario we test here.

\subsection{Precision and recall}
\label{sec:precision-recall}
In order to examine the quality of repairs, we compare, for each data set, the given relation $\rel$ with a repair $\repair$ and a gold standard $\rel_{\mathsf{gold}}$ and compute the following measures:
\begin{itemize}
    \item Precision ($P$): the number of correctly repaired cells divided by the number of repaired cells.
    \item Recall ($R$): the number of correctly repaired cells divided by the number of erroneous cells.
    \item $F$-score ($F$): harmonic mean of precision and recall.
\end{itemize}
Hereby, a \emph{repaired} cell is a cell for which the value is different in $\repair$ as compared to $\rel$.
A \emph{correctly repaired} cell is a repaired cell where in addition the values are the same in $\repair$ and $\rel_{\mathsf{gold}}$.
An \emph{erroneous} cell is a cell for which the value is different in $\rel_{\mathsf{gold}}$ compared to $\rel$.
As Llunatic uses a multi-sequence Chase method, it provides multiple repairs.
We report here the results for the repair that has the lowest cost.
Moreover, in the case of Llunatic, $\repair$ may contain named variables ($\mathsf{llun}$-values) that indicate that the cell should be assigned some constant value, but the value self remains unspecified.
Llunatic offers the option to call on human intervention to fill in the correct constant.
To report repair quality in the presence of variables, we propose to consider two extreme cases: one in which all variables are assigned the \emph{correct} constant and one in which all variables are assigned an \emph{erroneous} constant.
These cases provide a best and worst outcome of precision, recall and $F$-value whenever all variables would be assigned with constant values at random.
As such, for Llunatic, we will report \emph{intervals} of repair quality rather than single values.
Recent studies on Llunatic have used measures that correspond to our proposed upper bound \cite{Geerts2019}.
However, the inclusion of the lower bound introduces an important dimension to the analysis of results in the sense that a large amount of variables implies a big gap between lower and upper bound.
This is important to recognize as many variables imply many human interventions after the repair has been produced.

\begin{table}[!tbh]
\caption{Overview of precision ($P$), recall ($R$) and $F$-score obtained for Llunatic (two configurations) and Swipe (three repair functions).
Best results are marked in bold font.
n/a indicates the repair function was not applicable on that dataset.
}
\label{tab:repair-quality}
\centering
\begin{tabular}{ccccccc}
\toprule
                     & Data set  & Llunatic (S) & Llunatic (FT) & Swipe ($\mathsf{mv}$) & Swipe ($\mathsf{wv}$) & Swipe ($\mathsf{max}$)       \\
\midrule
\multirow{4}{*}{$P$} & Hospital & [0.93,0.96]  & [0.11,0.12]    & \textbf{0.96}  & \textbf{0.96} & n/a  \\ 
                     & Allergen & [0.00,0.57]  & \textbf{[0.64,0.64]} &  0.21 & 0.21 & \textbf{0.64} \\  
                     & Eudract  & [0.80,0.84]  & [0.80,0.84]    & 0.83 & \textbf{0.84} & 0.23\\  
                     & Flight   & [0.67,0.72]  & \textbf{[0.80,0.80]} & 0.71 & 0.79 &  0.56\\
\midrule
\multirow{4}{*}{$R$} & Hospital & [0.83,0.85]  & [0.29,0.31]   & \textbf{0.89} & \textbf{0.89} & n/a \\ 
                     & Allergen & [0.00,0.54]  & [0.30,0.30]   & 0.10 & 0.10 & 0.30  \\
                     & Eudract  & [0.31,0.32]  & [0.31,0.32]   & 0.36 & \textbf{0.37} & 0.32 \\ 
                     & Flight   & [0.66,0.71]  & [0.54,0.54]   & 0.68 & \textbf{0.79} & 0.58 \\
\midrule
\multirow{4}{*}{$F$} & Hospital & [0.88,0.90]  & [0.16,0.17]   & \textbf{0.92} & \textbf{0.92} & n/a \\ 
                     & Allergen & [0.00,0.55]  & [0.41,0.41]   & 0.13 & 0.13 & 0.41 \\  
                     & Eudract  & [0.44,0.47]  & [0.44,0.47]   & 0.51 & \textbf{0.52} & 0.27\\ 
                     & Flight   & [0.67,0.72]  & [0.65,0.65]   & 0.69 & \textbf{0.79} & 0.57\\
\bottomrule
\end{tabular}
\end{table}
Table~\ref{tab:repair-quality} provides an overview of the repair quality obtained by the two configurations of Llunatic and the three configurations of Swipe on all data sets.
For each combination of data set and measure of quality, we indicate the best performing approach in bold font.
For recall and $F$-score on the Allergen data set, we did not indicate a best approach as it cannot be indicated with certainty.
When inspecting the results in Table~\ref{tab:repair-quality}, several interesting observations can be made.
A first observation is that results are sensitive to the exact configurations.
This is especially true for the Allergen dataset, which provides allergen information for each product coming from only two sources.
This low number of sources makes voting procedures less suitable.
The resolution of conflicts by means of the $\mathsf{max}$ repair function, is a superior choice here.
This behaviour is also encoded in the FT scenario of Llunatic and it was confirmed that in this case, Llunatic and Swipe produced exactly the same results.

Interestingly, for other datasets, the voting scenarios of Swipe provide robust and good results.
When a data set has no \textsc{null} values (Hospital and Allergen), both voting scenarios are equivalent.
When \textsc{null} values are present (Eudract and Flight), weighted voting produces better results and turns out to be a good baseline choice.
The $\mathsf{max}$ function provide good results only in particular scenarios like the Allergen dataset.

For Llunatic, the sensitivity to configuration choices seems to be higher than for Swipe.
For the Hospital data set, there is a large difference in repair quality between the two configurations, most likely due to the usage of backward repairing in the second configuration.
In general, we found that giving Llunatic the possibility to include backward repairs (this is the case in the fine-tuned configurations) seems to decrease repair quality on the data sets considered here.
Generalizing this observation should be done with caution, but the results reported here make a strong case for at least questioning the usefulness of backward repair on real-life data sets.

A second observation is that when we compare the repair quality of Swipe with that of the configurations of Llunatic, it can be seen that Swipe produces repairs that are sometimes comparable and usually better than the best possible outcomes of Llunatic.
The improvement in $F$-score is hereby mostly attributed to an improvement in recall, whereas precision behaves comparable.
In this regard, it is interesting to see that Llunatic allows the usage of variables (i.e., lluns) when generating repairs. Opposed to that, repair functions used in the Swipe algorithm always produce constants.
The results in Table~\ref{tab:repair-quality} show that the more aggressive strategy where we always choose some constant is usually paying off.
Moreover, it can be seen that the number of variables might become very high.
This happens for example with the Allergen data set when using the standard configuration of Llunatic.
A high number of variables can be problematic as it requires much human intervention in the repair process.
This might however be a necessity in order to obtain good repairs.
Indeed, we see that for the Allergen data set, the upper bound of recall and $F$-score are much higher for the standard configuration of Llunatic than for the other approaches.

An interesting result is found for the Eudract data set, which is the only data set where we encounter a partition class with more than $2$ attributes.
For this class with $3$ attributes, the Swipe algorithm estimates the reliability of the involved attributes and uses that estimate to rank order FDs.
Table~\ref{tab:repair-quality} shows that using this order of FDs eventually produces a repair with $P=0.84$ and $R=0.37$.
To investigate how good this order is, we repeated the experiment but forced Swipe to use a different priority model for this class.
More precisely, we considered every possible priority order for the involved attributes in order to compute a best and worst case for precision and recall.
We then found that precision ranged between $0.69$ and $0.84$ and recall ranged between $0.31$ and $0.37$.
These additional results shows that the order in which we repair FDs can indeed make a large difference in the repair quality.
However, the sequence based on estimated reliability leads to the best possible repair quality.
This is true for the Eudract data set and also for the Hospital data set, where we have one partition class with $2$ attributes.
Also in the latter case, it is confirmed that the chosen order of FDs is the best one, although the difference in quality is less pronounced.

\subsection{Scalability}
\label{sec:scalability}
\paragraph{Run time comparison} In this section, we investigate the run time efficiency and scalability of Swipe.
First, we measure the mean run time over five runs of Swipe and Llunatic on the different data sets and report the mean execution times (in seconds).
We do not report other scalability parameters like CPU usage or main memory consumption, although we have investigated these parameters and came to similar conclusions.
In case of Llunatic, we differentiate between the standard configuration and the configuration that was fine-tuned for each data set (Table~\ref{tab:configurations}).
For Swipe, we took for each dataset, the best performing repair function.
The results are reported in Table~\ref{tab:execution-times}.

\begin{table}[!t]
\centering
\caption{Mean execution times (in seconds) of the repair methods.}
\label{tab:execution-times}
\begin{tabular}{lccc}
\toprule
Data set  & Llunatic (S) & Llunatic  (FT)   & Swipe (best)   \\
\midrule
Hospital  &  20.31 &  167.09    & \textbf{0.20} \\ 
Allergen  &  1.52  &  264.82    & \textbf{0.28} \\ 
Eudract   &  19.38 &  98.09     & \textbf{7.15} \\ 
Flight    &  910.05 &  3987.90  & \textbf{17.80} \\

\bottomrule
\end{tabular}
\end{table}

The results from Table~\ref{tab:execution-times} show that Swipe outperforms Llunatic on each data set in terms of execution time.
This comes as no surprise as Swipe is designed explicitly as a simplification of the multi-sequence Chase algorithm that underlies Llunatic.
Yet, we observe that the gain in execution time is considerable.
Swipe is two to three orders of magnitude faster than the fine-tuned configuration of Llunatic.
When we compare to the standard Llunatic configuration, the difference reduces to one order of magnitude.
This provides us with some interesting insights.
First of all, the cost manager of Llunatic does what it is supposed to do: restricting repair operations to forward repair only and limiting the branching factor significantly reduces the execution time of Llunatic.
However, Table~\ref{tab:repair-quality} shows that this gain in computational efficiency also leads to lesser quality of the repairs.
In that sense, the Swipe algorithm makes more considerate choices in reducing computational cost with better repair quality as a result.

\paragraph{Size of classes in $\parti$}
The execution time of Swipe is strongly influenced by the size of classes in $\parti$.
For smaller partition classes we usually require less revisions.
To that extent, we computed the sizes of the partition classes for each of the four data sets and came to the conclusion that over all data sets, there are 47 classes of which 45 where singletons.
One class had size 2 (Hospital) and one class had size 3 (Eudract).
This observation confirms that in real-life data sets, it is reasonable to apply the partitioning approach and to separate the repairs of different FDs.

\paragraph{Scalability in terms of $|\rel|$ and $|\schema|$}
In order to gain further understanding of the scalability of Swipe, we experimented with randomly generated data and constraints.
These data and constraints are generated as explained in Section~\ref{sec:exp-data}.
The repair functions are defaulted to majority voting with random tie breaking for all attributes.
We generate repairs of the given dirty data sets with Swipe and then measure the time to generate a repair in milliseconds.
For a fixed number of tuples and fixed number of attributes, we repeat the procedure $10$ times and report the mean of the measured run times.
\begin{figure}[h]
\centering
\resizebox{0.48\columnwidth}{!}{\begin{tikzpicture}
    \begin{axis}[
        legend pos={north west},
        ylabel={Mean runtime (ms)},
        xlabel={$|\rel|$},
        symbolic x coords={$10^2$,$10^3$,$10^4$,$10^5$, $10^6$},
        xtick=data,
        ymin=1,
        ymode=log,
        xtick pos=left,
        ytick pos=left,
        bar width = 8pt,
        xticklabel style={name=T\ticknum}
    ]
    \addplot[style={lightgray,mark options={fill=white},mark=*}] coordinates {($10^2$,4) ($10^3$,7) ($10^4$,80) ($10^5$,1153) ($10^6$,12489)};
    \addlegendentry{$|\schema|=5$}
    \addplot[style={darkgray,mark=*, dash pattern=on 1pt off 3pt on 3pt off 3pt}] coordinates {($10^2$,17) ($10^3$,63) ($10^4$,758) ($10^5$,9043) ($10^6$,104955)};
    \addlegendentry{$|\schema|=25$}
    \addplot[style={black,mark=none}] coordinates {($10^2$,54) ($10^3$,398) ($10^4$,4204) ($10^5$,80449) ($10^6$,700736)};
    \addlegendentry{$|\schema|=50$}

    \end{axis}

\end{tikzpicture}}
\resizebox{0.48\columnwidth}{!}{\begin{tikzpicture}
    \begin{axis}[
        legend pos={north west},
        ylabel={Mean runtime (ms)},
        xlabel={$|\schema|$},
        symbolic x coords={$5$,$10$,$15$,$20$, $25$, $30$, $35$, $40$, $45$, $50$},
        xtick=data,
        ymin=1,
        ymode=log,
        xtick pos=left,
        ytick pos=left,
        bar width = 8pt,
        xticklabel style={name=T\ticknum}
    ]
    \addplot[style={lightgray,mark options={fill=white},mark=*}] coordinates {($5$,7) ($10$,14) ($15$,25) ($20$,41) ($25$,63) ($30$,84) ($35$,118) ($40$,175) ($45$,277) ($50$,398)};
    \addlegendentry{$|\rel|=10^3$}
    \addplot[style={darkgray,mark=*, dash pattern=on 1pt off 3pt on 3pt off 3pt}] coordinates {($5$,80) ($10$,177) ($15$,335) ($20$,539) ($25$,758) ($30$,1128) ($35$,1510) ($40$,2307) ($45$,3202) ($50$,4203)};
    \addlegendentry{$|\rel|=10^4$}
    \addplot[style={black,mark=none}] coordinates {($5$,12489) ($10$,23750) ($15$,43751) ($20$,66874) ($25$,104955) ($30$,138941) ($35$,240694) ($40$,321412) ($45$,356827) ($50$,700736)};
    \addlegendentry{$|\rel|=10^6$}

    \end{axis}

\end{tikzpicture}}
\caption{Mean run time (ms) of $10$ executions of Swipe in function of changing $|\rel|$ (left) and $|\schema|$ (right).
}
\label{fig:scalability}
\end{figure}

Figure~\ref{fig:scalability} shows the obtained mean run times for a varying number of tuples (left) and a varying the number of attributes (right).
More precisely, the left panel shows the evolution of the mean run time as the number of tuples increases with powers of 10, starting at $|\rel|=10^2$ and ending at $|\rel|=10^6$.
This trend is shown three times: one time for a small schema ($|\schema|=5$), one time for a medium-sized schema ($|\schema|=25$) and one time for a large schema ($|\schema|=50$).
These results show that the run time of Swipe scales linearly in terms of increasing $|\rel|$ for different schema sizes.
Note in addition that the shift in run time from the small schema ($5$ attributes) to the large schema ($50$ attributes) is two orders of magnitude, suggesting quadratic behavior in terms of $|\schema|$.
This is confirmed in the right panel, that shows the evolution of the mean run time as the number of attributes increases with $5$, starting at $|\schema|=5$ and ending at $|\schema|=50$.
This trend is again shown three times: one time for a small relation ($|\rel|=10^3$), one time for a medium-sized relation ($|\rel|=10^4$) and one time for a large relation ($|\rel|=10^6$).
These results show that Swipe scales quadratic in terms of increasing $|\schema|$ for different relation sizes.
In addition, for an increase in $|\rel|$ with one order of magnitude, we observe shifts of one order of magnitude in run time, confirming the linear behaviour in terms of $|\rel|$.
These results show that Swipe is very efficient in repairing large relations when the number of attributes is relatively small.
For medium-sized schemata and number of FDs, Swipe can still repair relations relatively fast, especially if we compare to other state-of-the art repair methods.

\section{Conclusion}
\label{sec:conclusion}
In this paper, we have introduced the Swipe algorithm to repair violations of functional dependencies (FDs).
This algorithm is a degenerate variant of the Chase-based approach towards FD repairing.
It hinges on two key principles.
The first principle is to use a partition of attributes that is forward repairable.
We have provided an algorithm that, for a given set of FDs, constructs the most refined partition that meets this requirement of forward repairability.
The second principle is that of priority repairing, which fixes the order in which FDs are treated.
We have shown a simple heuristic to build such a priority model based on the estimated reliability of attributes.
From a theoretical point of view, we have shown that Swipe is ensured to terminate and that there are easy to meet conditions under which unary FDs are revision-free.
Empirical results show that Swipe provides an excellent trade-off between repair quality and computational efficiency.
Future improvements of Swipe can focus on applying the principles developed here, to more expressive constraints like conditional functional dependencies.

\bibliographystyle{acm}
\bibliography{swipe-references}

\begin{thebibliography}{10}

\bibitem{Abiteboul1995}
{\sc Abiteboul, S., Hull, R., and Vianu, V.}, Eds.
\newblock {\em Foundations of Databases: The Logical Level}, 1st~ed.
\newblock Addison-Wesley Longman Publishing Co., Inc., 1995.

\bibitem{Arenas1999}
{\sc Arenas, M., Bertossi, L., and Chomicki, J.}
\newblock Consistent query answers in inconsistent databases.
\newblock In {\em Proceedings of the Eighteenth ACM SIGMOD-SIGACT-SIGART
  Symposium on Principles of Database Systems\/} (New York, NY, USA, 1999),
  PODS '99, Association for Computing Machinery, p.~68–79.

\bibitem{Beeri1984}
{\sc Beeri, C., and Vardi, M.}
\newblock A proof procedure for data dependencies.
\newblock {\em Journal of the ACM 31\/} (1984), 718--741.

\bibitem{BenAmram2011}
{\sc Ben-Amram, A., and Yoffe, S.}
\newblock A simple and efficient union-find-delete algorithm.
\newblock {\em Theoretical Computer Science 412\/} (02 2011), 487–492.

\bibitem{Beskales2010}
{\sc Beskales, G., Ilyas, I., and Golab, L.}
\newblock Sampling the repairs of functional dependency violations under hard
  constraints.
\newblock {\em PVLDB 3\/} (09 2010), 197--207.

\bibitem{Boeckling2022}
{\sc Boeckling, T., De~Tré, G., and Bronselaer, A.}
\newblock Cleaning data with selection rules.
\newblock {\em IEEE Access 10\/} (2022), 125212--125229.

\bibitem{Bohannon2007}
{\sc Bohannon, P., Fan, W., Geerts, F., Jia, X., and Kementsietsidis, A.}
\newblock Conditional functional dependencies for data cleaning.
\newblock In {\em Proceedings of the IEEE International Conference on Data
  Engineering\/} (Istanbul, Turkey, 2007), IEEE, pp.~746--755.

\bibitem{Bohannon2005}
{\sc Bohannon, P., Flaster, M., Fan, W., and Rastogi, R.}
\newblock A cost-based model and effective heuristic for repairing constraints
  by value modification.
\newblock In {\em SIGMOD Conference\/} (Baltimore, Maryland, USA, 2005), ACM,
  p.~143–154.

\bibitem{Boskovitz2008}
{\sc Boskovitz, A.}
\newblock {\em Data Editing and Logic: The covering set method from the
  perspective of logic}.
\newblock PhD thesis, The Australian National University, 2008.

\bibitem{BronselaerAcosta2023}
{\sc Bronselaer, A., and Acosta, M.}
\newblock Parker: Data fusion through consistent repairs using edit rules under
  partial keys.
\newblock {\em Information Fusion 100\/} (2023), 101942.

\bibitem{Bronselaer2022}
{\sc {Bronselaer, Antoon and Boeckling, Toon and Pattyn, Filip}}.
\newblock {Dynamic repair of categorical data with edit rules}.
\newblock {\em {EXPERT SYSTEMS WITH APPLICATIONS} {201}\/} ({2022}), {15}.

\bibitem{Bronselaer2010}
{\sc {Bronselaer, Antoon and De Tré, Guy}}.
\newblock Aspects of object merging.
\newblock In {\em {North American Fuzzy Information Processing Society, 2010
  Annual meeting}\/} (Toronto, Canada, 2010), IEEE, pp.~27--32.

\bibitem{Xu2013}
{\sc Chu, X., Ilyas, I., and Papotti, P.}
\newblock Holistic data cleaning: Putting violations into context.
\newblock In {\em Proceedings - International Conference on Data Engineering\/}
  (Brisbane, QLD, Australia, 04 2013), IEEE, pp.~458--469.

\bibitem{Chu2015}
{\sc Chu, X., Morcos, J., Ilyas, I.~F., Ouzzani, M., Papotti, P., Tang, N., and
  Ye, Y.}
\newblock Katara: A data cleaning system powered by knowledge bases and
  crowdsourcing.
\newblock In {\em Proceedings of the 2015 ACM SIGMOD International Conference
  on Management of Data\/} (Melbourne, Victoria, Australia, 2015), SIGMOD '15,
  Association for Computing Machinery, p.~1247–1261.

\bibitem{Cong2007}
{\sc Cong, G., Fan, W., Geerts, F., Jia, X., and Ma, S.}
\newblock Improving data quality: Consistency and accuracy.
\newblock In {\em VLDB 2007\/} (Vienna, Austria, 2007), ACM, p.~315–326.

\bibitem{Dallachiesa2013}
{\sc Dallachiesa, M., Ebaid, A., Eldawy, A., Elmagarmid, A., Ilyas, I.~F.,
  Ouzzani, M., and Tang, N.}
\newblock Nadeef: A commodity data cleaning system.
\newblock In {\em Proceedings of the 2013 ACM SIGMOD International Conference
  on Management of Data\/} (New York, New York, USA, 2013), SIGMOD '13, ACM,
  p.~541–552.

\bibitem{DeWaal2011}
{\sc De~Waal, T., Pannekoek, J., and Scholtus, S.}
\newblock {\em Handbook of statistical Data Editing and Imputation}.
\newblock Wiley, 2001.

\bibitem{Dong2009}
{\sc Dong, X.~L., Berti-Equille, L., and Srivastava, D.}
\newblock Integrating conflicting data: the role of source dependence.
\newblock In {\em Proceedings of the VLDB conference\/} (Lyon, France, 2009),
  ACM, pp.~550--561.

\bibitem{Dong2012}
{\sc Dong, X.~L., Saha, B., and Srivastava, D.}
\newblock Less is more: Selecting sources wisely for integration.
\newblock In {\em Proceedings of the VLDB conference\/} (Trento, Italy, 2012),
  VLDB Endowment, pp.~37--48.

\bibitem{Ehrlinger2022survey}
{\sc Ehrlinger, L., and W{\"o}{\ss}, W.}
\newblock A survey of data quality measurement and monitoring tools.
\newblock {\em Frontiers in Big Data 5\/} (2022), 28.

\bibitem{Fan:2012}
{\sc Fan, W., and Geerts, F.}
\newblock {\em Foundations of Data Quality Management}.
\newblock Morgan \& Claypool Publishers, 2012.

\bibitem{Fellegi1976}
{\sc Fellegi, I., and Holt, D.}
\newblock A systematic approach to automatic edit and imputation.
\newblock {\em Journal of the American Statistical Association 71}, 353 (1976),
  17--35.

\bibitem{Galler1964}
{\sc Galler, B., and Fischer, M.}
\newblock An improved equivalence algorithm.
\newblock {\em Communications of the ACM 7\/} (1964), 301--303.

\bibitem{Geerts2013}
{\sc Geerts, F., Mecca, G., Papotti, P., and Santoro, D.}
\newblock The {LLUNATIC} data-cleaning framework.
\newblock {\em Proc. VLDB Endow. 6}, 9 (2013), 625–636.

\bibitem{Geerts2019}
{\sc Geerts, F., Mecca, G., Papotti, P., and Santoro, D.}
\newblock Cleaning data with {L}lunatic.
\newblock {\em The VLDB Journal 29\/} (2019), 867–892.

\bibitem{Ginsburg1982}
{\sc Ginsburg, S., and Zaiddan, S.~M.}
\newblock Properties of functional-dependency families.
\newblock {\em Journal of the ACM 29}, 3 (1982), 678--698.

\bibitem{Heidari2019}
{\sc Heidari, A., McGrath, J., Ilyas, I.~F., and Rekatsinas, T.}
\newblock Holodetect: Few-shot learning for error detection.
\newblock In {\em Proceedings of the 2019 International Conference on
  Management of Data\/} (New York, NY, USA, 2019), SIGMOD '19, Association for
  Computing Machinery, p.~829–846.

\bibitem{Ilyas2015}
{\sc Ilyas, I.~F., and Chu, X.}
\newblock Trends in cleaning relational data: Consistency and deduplication.
\newblock {\em Foundations and Trends in Databases 5}, 4 (2015), 281--393.

\bibitem{Ilyas2019}
{\sc Ilyas, I.~F., and Chu, X.}
\newblock {\em Data Cleaning}.
\newblock Association for Computing Machinery, New York, NY, USA, 2019.

\bibitem{Solmaz2009}
{\sc Kolahi, S., and Lakshmanan, L. V.~S.}
\newblock On approximating optimum repairs for functional dependency
  violations.
\newblock In {\em Proceedings of the 12th International Conference on Database
  Theory\/} (St. Petersburg, Russia, 2009), ICDT '09, Association for Computing
  Machinery, p.~53–62.

\bibitem{Li2013}
{\sc Li, X., Dong, X.~L., Lyons, K., Meng, W., and Srivastava, D.}
\newblock Truth finding on the deep web: Is the problem solved?
\newblock In {\em Proceedings of the VLDB conference\/} (Trento, Italy, 2013),
  VLDB Endowment, pp.~97--108.

\bibitem{Livshits2018}
{\sc Livshits, E., Kimelfeld, B., and Roy, S.}
\newblock Computing optimal repairs for functional dependencies.
\newblock In {\em Proceedings of the ACM SIGACT SIGMOD SIGART Symposium on
  Principles of Database Systems\/} (Houston, Texas, USA, 2018), ACM,
  p.~225–237.

\bibitem{Mahdava2020}
{\sc Mahdavi, M., and Abedjan, Z.}
\newblock Baran: Effective error correction via a unified context
  representation and transfer learning.
\newblock In {\em Proceedings of the VLDB Endowment\/} (Tokio, Japan, 2020),
  vol.~13, VLDB Endowment, p.~1948–1961.

\bibitem{Mahdava2019}
{\sc Mahdavi, M., Abedjan, Z., Castro~Fernandez, R., Madden, S., Ouzzani, M.,
  Stonebraker, M., and Tang, N.}
\newblock Raha: A configuration-free error detection system.
\newblock In {\em Proceedings of the 2019 International Conference on
  Management of Data\/} (Amsterdam, Netherlands, 2019), SIGMOD '19, Association
  for Computing Machinery, p.~865–882.

\bibitem{Pham2021}
{\sc Pham, M., Knoblock, C.~A., Chen, M., Vu, B., and Pujara, J.}
\newblock Spade: A semi-supervised probabilistic approach for detecting errors
  in tables.
\newblock In {\em Proceedings of the Thirtieth International Joint Conference
  on Artificial Intelligence, {IJCAI-21}\/} (Montreal, Canada, 2021),
  International Joint Conferences on Artificial Intelligence Organization,
  pp.~3543--3551.

\bibitem{Rekatsinas2017}
{\sc Rekatsinas, T., Chu, X., Ilyas, I., and R\'e, C.}
\newblock Holoclean: Holistic data repairs with probabilistic inference.
\newblock In {\em Proceedings of the VLDB Endowment\/} (Munich, Germany, 2017),
  VLDB Endowment, pp.~1190--1201.

\bibitem{Rezig2017}
{\sc Rezig, E.~K., Ouzzani, M., Aref, W.~G., Elmagarmid, A.~K., and Mahmood,
  A.~R.}
\newblock Pattern-driven data cleaning, 2017.

\bibitem{Tarjan1975}
{\sc Tarjan, R.~E.}
\newblock Efficiency of a good but not linear set union algorithm.
\newblock {\em Journal of the ACM 22\/} (1975), 215--255.

\bibitem{Tarjan1984}
{\sc Tarjan, R.~E., and van Leeuwen, J.}
\newblock Worst-case analysis of set union algorithms.
\newblock {\em Journal of the ACM 31\/} (1984), 245--281.

\bibitem{Warren75}
{\sc Warren, H.}
\newblock A modification of warshall's algorithm for the transitive closure of
  binary relations.
\newblock {\em Communications of the ACM 18}, 4 (1975), 218--220.

\bibitem{Wijsen2006}
{\sc Wijsen, J.}
\newblock Project-join-repair: An approach to consistent query answering under
  functional dependencies.
\newblock In {\em Flexible Query Answering Systems\/} (Berlin, Heidelberg,
  2006), H.~L. Larsen, G.~Pasi, D.~Ortiz-Arroyo, T.~Andreasen, and
  H.~Christiansen, Eds., Springer Berlin Heidelberg, pp.~1--12.

\end{thebibliography}

\appendix
\section{Proofs}
\label{appendix-proofs}

\begin{proof}[Proof of Proposition~\ref{prop:seq-rep-implication}]
If $\parti$ satisfies the condition from Definition~\ref{def:forward-seq-repair}, we can assign, for each $\cl{i}\in\parti$ with partial repair $\repairi{i-1}$ and for each $\fd\in\fds[\schemai{i}]\setminus\fds[\schemai{i-1}]$, equal values to $\rhs{\fd}$ in order to satisfy $\fd$.
\end{proof}

\begin{proof}[Proof of Theorem~\ref{theorem:partition}]
We show first that $\parti$ allows sequential forward repairability for $\fds$ (i) and then that $\parti$ cannot be refined without losing this property (ii).

\fbox{(i)} If $\parti$ is induced by $P^+$ for a given set $\fds$, then consider an arbitrary class $\cl{i}$ from $\parti$ and any FD $\fd\in\fdcov[\schemai{i}]\setminus\fdcov[\schemai{i-1}]$.
If $\rhs{\fd}\notin\cl{i}$, then $\parti$ cannot be induced by $P^+$ because it would violate either Eq.~\eqref{eq:equiv-criterion} or Eq.~\eqref{eq:sort-criterion}.
It follows that we must have $\rhs{\fd}\in\cl{i}$ and thus $\fseqrep{\parti}{\fds}$.

\fbox{(ii)} Suppose there is a refinement $\parti'$ of $\parti$ for which $\fseqrep{\parti'}{\fds}$.
This would imply that there is a class from $\parti$ that can be split into two disjoint subclasses without breaking forward repairability.
In turn, this would imply that there is an equivalence class from $\equiv_{P^+}$ that can be split into two subclasses.
Let us denote the equivalence relation in which this split is done by $\equiv_{P'}$.
Now, if $\equiv_{P'}$ still contains $P$ then $P^+$ cannot be the transitive closure of $P$ and we have a contradiction.
Alternatively, if $\equiv_{P'}$ does not contain $P$, then there is at least one $(b,a)\in P$ that is not accounted for and $\parti'$ is not forward repairable.
In both cases, we obtain a conclusion that is in contradiction with the premise and it follows no refinement of $\parti$ is forward repairable.
\end{proof}

\begin{proof}[Proof of Theorem~\ref{theorem:termination}]
We first prove that Algorithm~\ref{algo:priority-repair} terminates (i) and then that after termination we have obtained a partial repair $\repairi{i}$ (ii).

\fbox{(i)} To see that $\mathbf{priorityRepair}\left(\repairi{i-1}, \fds_i, \cl{i}\right)$ terminates, note that each time an FD $\fd$ is considered, either $\dsf{\rhs{\fd}}$ does not change or its number of classes decreases.
If for all $a\in\cl{i}$, $\dsf{a}$ does not change anymore, the algorithm terminates. 
Else, we must reach a point in which each $\dsf{a}$ has only one class.
In that case, for each $a\in\cl{i}$, each tuple in $\repairi{i-1}$ gets the same value for $a$.
It follows that in this case, there are no more violations, from which it follows that all $\dsf{a}$ remain unchanged and the algorithm stops.

\fbox{(ii)} Each time an FD $\fda$ is polled from the stack, the fix step updates $\dsf{a}$ and potentially changes $\repairi{i-1}$ in attribute $a$.
In the revision step that follows, any FD $\fd'\in\fds_i$ that is not on the stack, was polled and repaired before and there are three options.
First, if $a\notin\lhs{\fd'}$ and $a\neq\rhs{\fd'}$, then clearly this FD is still satisfied.
Second, if $a=\rhs{\fd'}$, then because $\dsf{a}$ changes only by merging classes, any two tuples with the same value for $\lhs{\fd'}$ are still in the same class in $\dsf{a}$ and thus will also have the same value after fixing $\fda$.
Hence, $\fd'$ remains satisfied.
Third, if $a\in\lhs{\fd'}$, we put $\fd'$ back on the stack for revision.
It follows that after application of $\textbf{fix}\left(\repairi{i-1}, \fda, \dsf{a}\right)$, any FD from $\fds_i$ not on the stack is currently satisfied by $\repairi{i-1}$.
At the same time, at termination time, the stack is empty and thus all FDs are satisfied.
\end{proof}

\begin{proof}[Proof of Proposition~\ref{prop:revision}]
For an FD $\fdxy{a'}{a}$ during a revision step, we can make a sequence of observations:
\begin{enumerate}
\item If $a'\notin\cl{i}$ then we have $\forall \fd\in\fds_i: \rhs{\fd}\neq a'$ and consequently $\fdxy{a'}{a}$ is never considered during any revision step.
We therefore assume that $a'\in\cl{i}$.

\item $\fdxy{a'}{a}$ is considered in a revision step only if it is not an element of $\mathbb{S}$.
This means it was fixed already and thus that any two tuples with equal values for $a'$ are in the same class of $\dsf{a}$.

\item $\fdxy{a'}{a}$ is considered in a revision step if the preceding fix step involves an FD $\fdxy{X}{a'}$ and is therefore of the form
 $\mathbf{fix}\left(\repairi{i-1}, \fdxy{X}{a'}, \dsf{a'}\right)$.
In the remainder of this proof, we denote the partial repair \emph{before} this fix step as $\rbefore$ and \emph{after} this fix step as $\rafter$.
Clearly, $\rafter\models \fdxy{X}{a'}$.

\item From (1) and (2) we have that $\fdxy{X}{a'}$ mentioned in (3) is not a pilot FD. Thus, FDs are polled from the stack in order induced by $\succ$.

\item The fix step in (3) is preceded by a poll of $\fdxy{X}{a'}$, so (2) implies $a\succ a'$.

\item From (2) and (5) we have that $\rbefore\models \fdxy{a'}{a}$. In addition, transitivity of FDs implies $\rbefore\models \fdxy{X}{a}$.
\end{enumerate}

Suppose now $\rafter$ fails $\fdxy{a'}{a}$, then there must exist two rows $r_1\in\rbefore$ and $r_2\in\rbefore$ that are transformed into $r^*_1\in \rafter$ and $r^*_2\in\rafter$, respectively, for which we have
\begin{equation*}
r^*_1[a'] = r^*_2[a'] \wedge r^*_1[a] \neq r^*_2[a].    
\end{equation*}
In addition, we know that:
\begin{equation*}
r^*_1[a']\neq r_1[a']\vee r_2[a']\neq r^*_2[a']
\end{equation*}
because otherwise $r_1[a'] = r^*_1[a'] = r^*_2[a'] = r_2[a']$ from which
$\rbefore\models\fdxy{a'}{a}$ that $r^*_1[a] = r^*_2[a]$ and this contradicts our construction of $r_1$ and $r_2$.
We can then distinguish between two cases.

\fbox{1} If $r^*_1[X] = r^*_2[X]$ then we have that $\rafter\not\models\fdxy{X}{a}$ and because $a'\notin X$, we have $\rbefore[X] = \rafter[X]$.
As such, it follows that $\rbefore\not\models\fdxy{X}{a}$, which
contradicts with (6).

\fbox{2} If $r^*_1[X] \neq r^*_2[X]$ then because $\rf{a'}$ is preservative, there must exist rows $r_3\in \rbefore$ and $r_4\in \rbefore$ such that on one hand:
\[r_1[X] = r_3[X] \wedge r^*_1[a'] = r^*_3[a'] = r_3[a']\]
and on the other hand
\[r_2[X] = r_4[X] \wedge r^*_2[a'] = r^*_4[a'] = r_4[a'].\]
That is, $r_1$ and $r_2$ received their values for $a'$ from respectively $r_3$ and $r_4$ and this is only possible if $r_1$ and $r_3$ have the same value for $X$ and $r_2$ and $r_4$ have the same value for $X$.
Since we assumed $r^*_1[a'] = r^*_2[a']$, it follows that $r_3[a'] = r_4[a']$ and because $\rbefore\models \fdxy{a'}{a}$ we must also have $r_3[a] = r_4[a]$.
Finally, since $\rbefore\models \fdxy{X}{a}$, we also find that $r_1[a]=r_3[a]$ and $r_2[a] = r_4[a]$ by which we find that $r_1[a]=r_2[a]$.
Again, this contradicts our construction of $r_1$ and $r_2$.
\end{proof}

\end{document}